\renewcommand{\P}{\mathbb{P}}
\newcommand{\Xall}{\mathbf{X}^{1:K}}
\newcommand{\bX}{\mathbf{X}}
\newcommand{\Xallij}{{X}_{ij}^{1:K}}
\newcommand{\Zall}{\mathbf{Z}}
\newcommand{\bZ}{\mathbf{Z}}
\newcommand{\bz}{\mathbf{z}}
\newcommand{\by}{\mathbf{y}}
\newcommand{\A}{\bf\mathcal{A}}
\newcommand{\bmu}{\boldsymbol{\mu}}
\newcommand{\balpha}{\boldsymbol{\alpha}}
\newcommand{\bpi}{\boldsymbol{\pi}}
\newcommand{\bbeta}{\boldsymbol{\beta}}
\newcommand{\btau}{\boldsymbol{\tau}}
\def \ind{\mathbb{I}}
\providecommand{\keywords}[1]{\textbf{\textit{Keywords---}} #1}
\newtheorem{Rem}{Remark}
\newtheorem{theorem}{Theorem}[section]
\author[1]{Pierre Barbillon \thanks{pierre.barbillon@agroparistech.fr}}
\author[1]{Sophie Donnet}
\author[2]{Emmanuel Lazega}
\author[3]{Avner Bar-Hen}
\affil[1]{AgroParisTech / UMR INRA MIA 518, 16 rue Claude Bernard, 75231 Paris Cedex 05, France}
\affil[2]{Institut d'\'Etudes Politiques de Paris (Sciences Po), D\'epartement de Sociologie,
Centre de Sociologie des Organisations,
19 rue Amélie, 75007
Paris, France}
\affil[3]{MAP5, UFR de Math\'ematiques et Informatique
Universit\'e Paris Descartes
45 rue des Saints-P\`eres
75270 Paris cedex 06}
\title{Stochastic Block Models for Multiplex networks: an application to networks of researchers}
\date{}
\begin{document}
\maketitle

 \begin{abstract}
Modeling relations between individuals is a classical question in social sciences and
clustering individuals according to the observed patterns of interactions allows to uncover a
latent structure in the data.
Stochastic block model (SBM) is a popular
approach for grouping the individuals with respect to their social comportment.  When several relationships of various types can occur jointly  between the individuals, the data are represented by multiplex networks where more than one edge can exist between the nodes.
In this paper, we extend the SBM
to multiplex networks in order to obtain a clustering based on more than one kind of relationship.
We propose to estimate the parameters --such as the marginal probabilities of
assignment to  groups (blocks) and the matrix of probabilities of connections between groups--  through a variational Expectation-Maximization procedure.
Consistency of the estimates as well as statistical properties of the model are obtained.
The number of groups is chosen thanks to the Integrated Completed Likelihood criteria, a penalized likelihood criterion.
Multiplex Stochastic Block Model arises in many situations but our applied example is motivated by a network of French cancer
researchers. The two possible links (edges) between researchers are a direct connection or a connection through their labs.
Our results show strong interactions between these two kinds of connections and the groups  that are obtained are discussed
to emphasize the common features of researchers grouped together.

 \end{abstract}

  \keywords{Bivariate Stochastic Block Model, Multilevel / Multiplex networks, Social network.}

 \section{Introduction}

Network analysis has emerged as a key technique for understanding and for
investigating social interactions through the properties of relations between and within units.
From a statistical point of view, a network is a realization of a random graph formed by a set of nodes $V$ representing the units (e.g. individuals, actors, companies)
and a set of edges $E$ representing relationships between pairs of nodes. 
\\
The system in which the same nodes belong to multiple networks is typically referred to as a multiplex network or multigraph
(see \cite{wasserman1994social} for example).
In recent literature, there has been an upsurge of interest in multiplex networks (see for example \cite{PhysRevE.86.036115,2014arXiv1406.2205L,rank2010structural,szell2010multirelational,mucha2010community,maggioni2013multiplexity,brummitt2012suppressing,saumell2012epidemic,PhysRevE.87.062806,nicosia2013growing}).
In these multiplex networks, different kinds of links (or connections)  are possible for each pair of nodes. This induced
link multiplexity is a fundamental aspect of social relations \citep{snijders2003multilevel} since these multiple links
are frequently interdependent: 
links in one network may have an  influence on the formation or dissolution of links in other networks.
\\

The simultaneous analysis  of several networks   also arises  when one is interested in  the social comportment of individuals  belonging to organized entities (such as companies,
laboratories, political groups, etc.),
with some individuals possibly belonging to the same institution. While the actors will exchange resources (such as advice for instance)
at the individual level, their respective organizations of affiliation will also share resources at the institutional level
(financial resources for instance). Each level (individuals and organizations) constitutes a system of exchange of different
resources that has its own logic and could  be studied separately. However, studying the two networks jointly
(and hence embedding the individuals in the multilevel relational and organizational structures  constituting the inter-organizational
context of their actions) would allow us to identify the individuals that benefit from
relatively easy access to the resources circulating in each level, which is of much more interest. In other words, studying the two levels
jointly could help us understand how an individual can benefit from the position of its organization in the institutional network.

In this paper, we are interested in 
studying  the advice relations between researchers and the  exchanges of resources between laboratories. We adopt the following individual-oriented strategy (this point is discussed in the paper):
the institutional network is used to define a new network on the individual level i.e.  the set of nodes consists in the set of individuals and for a pair of individuals, two kinds of links are possible: a direct connection given by the individual network and a connection through their organizations given
by the organizational network. As a consequence,  the individual and institutional levels are fused into a multiplex.
\\

We then develop  a statistical model able to detect in multiplex
 substantial non-trivial topological features, with patterns of connection between their elements that are not purely regular.
Several models such as scale-free networks and small-world networks have been proposed to describe and understand the heterogeneity observed in networks.
These models  allow to derive properties of the network at the macro-scale and to understand   the outcomes of interactions. To explore heterogeneity at others scales (such as micro or meso-scale) in social networks,
specific models such as the stochastic block models (SBM) (\cite{Snijdersnowicki1997})
 have been developed for uniplex networks.
 In this paper, we propose an original extension of the SBMs to the multiplex case.  Our model is efficient to model
not only the main effects (that correspond to a classical uniplex) but also the pairwise interactions between the nodes. We estimate the parameters of the multiplex SBM using an extension of the variational EM algorithm.
Consistency of the estimation of the parameters is proved. As for uniplex SBM, a key issue is to choose the number of blocks.
We use a penalized likelihood criterion, namely Integrated Completed Likelihood (ICL).
 The inference procedure is performed on the cancer researchers / laboratories dataset.  \\

The paper is organized as follows. The extension of SBM to multiplex network is presented in Section
\ref{sec=multiplex_model}, the proofs of model identifiability and the consistency of variational EM procedure
are postponed in Appendices \ref{app:identif} and \ref{appendix_VarEM}.
In Section  \ref{sec=multilevel_result}, we describe  \cite{Lazega2008}'s dataset, apply the  new modeling and discuss the results. 
Eventually, the contribution of multiplex SBM to the analysis of multiplex networks is highlighted in Section \ref{sec=discussion}.

\section{Multiplex stochastic block model}\label{sec=multiplex_model}

The main objective is to  cluster the individuals (or nodes)
into blocks sharing connection properties with the other individuals of the multiplex-network.  Stochastic block models \citep{nowickiSnijders2001}
for random graphs have emerged as a natural tool to perform such a clustering based on uniplex networks (directed or not, valued or not).
In the following, we propose an extension of the Stochastic Block Model (SBM) to multiplex networks. The SBM for multiplex networks is derived from a
multiplex
Erd\"os-R\'enyi model which is described in subsection \ref{subsec:ER multilevel}.
The SBM for multiplex networks is derived in subsection \ref{subsec :SBM multilevel}.

\subsection{ Erd\"os-R\'enyi model for multiplex networks}\label{subsec:ER multilevel}

Let $\bX^1, \dots, \bX^K$ be $K$ directed graphs relying on the same set of nodes $E=\{1, \dots, n\}$.
We assume that $\forall (i,j), i\neq j, \forall k \in \{1,\dots,K\}$, $X_{ij}^k \in \{0,1\}$ and $X_{ii}\neq 0$. We define a joint distribution on $\Xall = (\bX^1, \dots,\bX^K)$ as:
$\forall (i,j) \in \{1,\dots, n\}^2, i\neq j$, $\forall w \in \{0,1\}^K$,
\begin{eqnarray}
\label{model:multi}
\P(\Xallij=w) =  \pi^{(w)} \mbox{ where }  \sum_{  w \in \{0,1\}^K} \pi^{(w)} =1\,,
\end{eqnarray}
and $(\Xallij)_{i,j}$ are mutually independent.

The maximum likelihood estimate of the parameter of interest  $\bpi =(\pi^{(w)})_{w \in \{0,1\}^K} $   is,
 for all $w \in \{0,1\}^K$:
\begin{equation}
\nonumber
 \widehat{\pi}_{w} = \frac{1}{n(n-1)} \sum_{i,j, i\neq j} \ind_{\{X_{ij}^{1:K} =w\}}\,.
\end{equation}

This model is quite simple since any relation between two individuals
(a relation being a collection of edges) does not depend on the relations between the other individuals.
However, the different kind of relations between two individuals (edges) are not assumed to be independent.

\begin{Rem} This  model is clearly an extension of the Erd\"os-R\'enyi model since the marginal
distribution of $X^{k}_{ij}$ (for any $k=1\dots K$) is Bernoulli with density :
\begin{equation}
\nonumber
 \P(X^{k}_{ij}=x^{k}_{ij})=\left(\sum_{w \in \{0,1\}| w_k=1}  \pi^{(w)}\right)^{x^{k}_{ij}}\left(\sum_{w \in \{0,1\}| w_k=0}  \pi^{(w)}\right)^{1-x^{k}_{ij}}\,.
\end{equation}
Moreover, any conditional distribution of $X^k_{ij}$ given $(X^l_{ij})_{l \in \mathcal{S}_{\backslash k}}$ (where $\mathcal{S}_{\backslash k}$ is a subset of $\{1,\dots,K\}$ not containing $k$)  is also univariate Bernoulli.
For instance, if $K=2$ the conditional distribution of $X^1_{ij}$ given $X^2_{ij}$ is
\begin{equation}
 \nonumber
 \P(X^1_{ij}=x^1_{ij}|X^2_{ij}=x^2_{ij})=\left(\frac{\pi^{(1,x^2_{ij})}}{\pi^{(1,x^2_{ij})}+\pi^{(0,x^2_{ij})}}\right)^{x^1_{ij}}\left(\frac{\pi^{(0,x^2_{ij})}}{\pi^{(1,x^2_{ij})}+\pi^{(0,x^{2}_{ij})}}\right)^{(1-x^1_{ij})}\,.
\end{equation}
Moreover, the  components of the bivariate Bernoulli random vector $(X^1_{ij},X^2_{ij})$ are independent if and only if $\pi^{(00)}\pi^{(11)} = \pi^{(10)}\pi^{(01)}$.

\end{Rem}

\paragraph{Introduction of explanatory variables.}
Naturally, the Erd\"os-R\'enyi model for multiplex networks can be extended to take into account explanatory variables.
Let $\by_{ij}$ denote the covariates characterizing the couple of nodes $(i,j)$, the model is defined by the
probabilities:
\begin{equation}\label{model1_mER_cov}
\begin{array}{ccl}
  \P(\Xallij= w)& =& \frac{\exp\left(\mu^{(w)}+ \left(\beta^{(w)}\right)^{\intercal} \by_{ij}\right)}{1+\sum_{v\not=(0,\ldots,0)}\exp\left(\mu^{(v)}+ \left(\beta^{(v)}\right)^{\intercal} \by_{ij}\right)}
  \quad \forall w\not=(0,\ldots,0),\\
 \P(\Xallij= (0,\ldots,0))& =&\frac{1}{1+\sum_{v\not=(0,\ldots,0)}\exp\left(\mu^{(v)}+ \left(\beta^{(v)}\right)^{\intercal} \by_{ij}\right)}  \,,
 \end{array}
 \end{equation}
where $x^{\intercal}$ denotes the transposed vector of $x$.

\begin{Rem}
Note that in the multiplex Erd\"os-R\'enyi model, the modeling is actor based, which means that the individuals are the same for all
the networks $X^1, \dots,X^K$ and   we model conjointly all the connections. As a consequence, the covariates $\by_{ij}$ only depend on the couple $(i,j)$
and are not linked to the network under consideration.
\end{Rem}

Since the multiplex Erd\"os-R\'enyi model belongs to exponential models,  the generalized linear model theory applies when we introduce the covariates as in
model (\ref{model1_mER_cov}) and the estimates are obtained using standard optimization strategies.

\subsection{Stochastic block model for multiplex networks}\label{subsec :SBM multilevel}
When the goal is to cluster  individuals according to  their social comportment, we can
derive a Stochastic Block Model  version of the multiplex Erd\"os-R\'enyi model.
Let $Q$ be the number of blocks and $Z_i$ the latent variable such that $Z_i=q$ if  the individual $i$ belongs to block $q$
(note that an individual can only belong to one block in this version).

\vspace{1em}

The multiplex version of SBM is written  as follows: $\forall (i,j) \in \{1,\dots,n\}^2, i \neq j$, $\forall w\in \{0,1\}^{K}$, $\forall (q,l) \in \{1,\dots,Q\}^2$,
\begin{equation}\label{modelK}
\begin{array}{rcl}
\P(X^{1:K}_{ij} = w|Z_i=q,Z_j=l)
&=& \pi_{ql}^{(w)}   \\
P(Z_i = q) &=& \alpha_q\,.
\end{array}
\end{equation}
Such a model  includes $(2^K-1)Q^2+(Q-1)$ parameters. Introducing the following notations:
\begin{equation}
\nonumber
\balpha = (\alpha_1,\dots,\alpha_Q), \quad \bpi = (\pi^{(w)}_{ql})_{w\in \{0,1\}^K, (q,l) \in \{1,\dots,Q\}^2},\quad  \theta=(\balpha,\bpi)\,,
\end{equation}
the likelihood function is written as:
\begin{eqnarray}\label{eq:likelihood SBM-Klevel}
 \ell(\Xall; \theta) &=& \int_{\bz  \in \{1,\dots Q\}^n} p(\Xall | \bZ;\bpi) p( \bZ; \balpha) d \bZ \nonumber\,,\\
 &=& \sum_{\bZ  \in \{1,\dots Q\}^n}  \prod_{i,j,i\neq j}   \pi_{Z_i Z_j}^{(X^{1:K}_{ij})}  \prod_{i=1}^n \alpha_{Z_i}\,,
\end{eqnarray}
where the latent variable (the block affectations) are integrated out.
The identifiability of the model can be proved (see Appendix \ref{app:identif}, theorem \ref{theo:identif}) and the maximum likelihood estimators are consistent (theorem \ref{theo:consist}).

\begin{Rem}
 Note that, as before, covariates on the couple $(i,j)$ can be introduced in the model: $\quad \forall w\not=(0,\ldots,0),$
\begin{equation}\label{eq:MulitSBM_cov}
\begin{array}{ccl}
  \P(\Xallij= w|Z_i=q,Z_j=l)& =& \frac{\exp\left(\mu_{ql}^{(w)}+ \left(\beta_{ql}^{(w)}\right)^{\intercal} \by_{ij}\right)}{1+\sum_{v\not=(0,\ldots,0)}\exp\left(\mu_{ql}^{(v)}+ \left(\beta_{ql}^{(v)}\right)^{\intercal} \by_{ij}\right)} \,,\\
 \P(\Xallij= (0,\ldots,0)|Z_i=q,Z_j=l)& =&\frac{1}{1+\sum_{v\not=(0,\ldots,0)}\exp\left(\mu_{ql}^{(v)}+ \left(\beta_{ql}^{(v)}\right)^{\intercal} \by_{ij}\right)}  \,.
 \end{array}
  \end{equation}
However, the number of parameters in this new model will increase drastically, leading to estimation issues.
\end{Rem}

\paragraph{Maximum likelihood and model selection}
As soon as $n$ or $Q$ are large, the observed likelihood (\ref{eq:likelihood SBM-Klevel})
is not tractable (due to the sum on $\bZ  \in \{1,\dots Q\}^n$) and its maximization is a challenging task.
Several approaches have been developed in the literature \citep[for a review, see][]{matias:hal-00948421},
both in the frequentist  and  Bayesian frameworks, starting from \cite{Snijdersnowicki1997} and \cite{nowickiSnijders2001}.
However, when the latent data space is really large, these techniques can be burdensome.
Some other strategies have been proposed, such as  \cite{bickelchen2009} which relying  on a profile-likelihood optimization
or  the moment estimation proposed by \citet{ambroisematias2012},  to name but a few .

The variational EM in the context of SBM proposed by \cite{Daudinetal2008} is a flexible tool to tackle the computational challenge in many types of graphs.
Simulation studies showed its practical efficiency \citep{mariadassou2010}.
Moreover, its theoretical convergence towards the maximum likelihood estimates has been studied by  \cite{celisse:daudin:laurent:2012} for binary graphs.   In this paper, we adapt the variational EM to multiplex networks.
The algorithm is  described in  Appendix \ref{appendix_VarEM} and its convergence towards the true parameter is proved (Theorem \ref{theo:conv var EM}).

The selection of the most adequate number of blocks $Q$ is performed using a modification of the ICL criterion  \citep[as in][]{Daudinetal2008,mariadassou2010}.
Let $\mathcal{M}_Q$  denote the model defined  (\ref{eq:MulitSBM_cov}) with $Q$ blocks.
\begin{eqnarray*}
ICL(\mathcal{M}_Q)& =& \max_{\theta} \log p(\Xall, \widetilde{\bZ}; \theta) -   \frac{1}{2}\left\{ Q^2 (2^K-1)  \log(K n (n-1)) +   (Q-1) \log n \right\}\,.
\end{eqnarray*}
where $\widetilde{\bZ}$ are the predictions of the assignments $\bZ$ obtained as a sub-product of the variational EM algorithm (see Appendix \ref{appendix_VarEM}).
As in the BIC criteria, the $\log$ refers to the number of data. Thus, the $n$ nodes are used to estimate the $Q-1$ probabilities $\alpha_1, \dots, \alpha_{Q-1}$. The $K n (n-1)$ edges are used to estimate $\bpi$.  No theoretical results exist for the ICL properties, but this criteria has proved its efficiency in practice.

\begin{Rem}
If we consider covariates as in equation (\ref{eq:MulitSBM_cov}), then the ICL is adapted :
\begin{equation}
 \nonumber
 ICL(\mathcal{M}_Q) = \max_{\balpha,\bmu,\bbeta} \log p(\Xall, \widetilde{\bZ};\balpha,\bmu,\bbeta)  - \frac{1}{2}\left\{ P_Q \log(K n (n-1)) +   (Q-1) \log n \right\} \,,
\end{equation}
where  $P_Q$ is the dimension of $(\bmu, \bbeta)$.
\end{Rem}


\section{Analysis for the laboratory-researcher data}\label{sec=multilevel_result}

\subsection{The data}

French  scandals during the 1990s  involving the voluntary sector around the cancer research dried up large donations that funded research laboratories.
In the 2000s, the cancer research became politicized, with the launch of the Cancer Plan and the creation of a dedicated institution.  The aim of this public  agency is to coordinate the cancer research and to promote collaborations about top researchers.
In this context, \cite{Lazega2008} studied the relations of advice between French cancer researchers  identified as ``Elite'' conjointly with the relations of their respective laboratories.

At the inter-individual level, the actors  (researchers) were submitted a list of cancer researchers and asked in interviews whom they sought advice from.  The advice were of five types,
namely  advice to deal with choices about the direction of projects, advice to  find institutional support, advice to handle financial
resources, advice for  recruitment, and finally advice about manuscripts
before submitting them to journals. The five advice networks are too sparse to be studied separately. That is why  they were aggregated. Therefore two researchers are considered as linked if at least
one kind of relationship exists.
Obviously the links are directed.\\
At the laboratory level (concerning only laboratories with ``Elite'' researchers),
the laboratory directors were asked to specify what type of resources they exchanged
with the other laboratories on the list. The examined resources were the recruitment
of post-docs and researchers, the development of programs of
joint research, joint responses to tender offers, sharing of technical
equipment, sharing of experimental material, mobility of
administrative personnel, and invitations to conferences and seminars.
Once again, to avoid over-sparsity,  the various networks were aggregated and two laboratories are said to be linked if there is at least one link between them.
From this network on labs, we can derive indirect links between the researchers, i.e. two researchers are connected if their laboratories exchange resources.
We finally have two adjacency matrices on the same set of nodes (researchers).

This corresponds to transforming the multilevel network (individual/ organization) into a multiplex network (several kinds of relations among individuals).  This is reasonable since the
majority of laboratories contains a unique ``Elite'' researcher. Thus, there is no big difference in the number of nodes between
the institutional and individual  levels.

In addition, auxiliary covariates are available to describe  the researchers:
 their age, their specialty, two publication performance scores based on two
 periods of five years, their status (director of the lab or not).
Auxiliary covariates are also available the laboratories: their size (number of researchers) and
 their location. \\
 Complete data for $95$ researchers identified as the ``Elite'' of French cancer research working in $76$ laboratories
are available. \\

\subsection{Statistical inference through SBM for multiplex}

To estimate the parameters of the multiplex SBM model on this dataset, we use a modified version of the variational EM algorithm \citep{Daudinetal2008} described in the Appendix Section \ref{appendix_VarEM}.
The optimization of the ICL criterion derived from likelihood (\ref{eq:likelihood SBM-Klevel})
 leads to four blocks (indistinctly denominated clusters or groups).
 For the sake of clarity, we index by $R$ and $L$ (rather than $\bX^{1}$, $\bX^{2}$) the two adjacency matrices (respectively the direct and indirect ones)

 In Figures \ref{figRconnection} and \ref{figLconnection} we plot the marginal and conditional probabilities
 of the connections of researchers (respectively labs) between and within blocks. 

Note that the study of the estimated  marginal distributions allows us to have results on the researchers without considering the laboratories.
This gives a clear interpretation of the importance of the lab for the researcher network structure. The obtained blocks are described in Table
\ref{tabmulti}: (\ref{tabmulti}a) gives the sizes of the four blocks;
(\ref{tabmulti}b), (\ref{tabmulti}c) and (\ref{tabmulti}d) describe the blocks with respect to the covariates ``location", ``director of not'' and ``specialty''.
The estimations by the variational EM procedure
 were conducted by wmixnet \citep{leger2014} with our specific implementation of the bivariate Bernoulli model. \\

\begin{figure}[h!t!]
\begin{center}
\begin{tabular}{cc}
\multicolumn{2}{c}{\includegraphics[width=7cm]{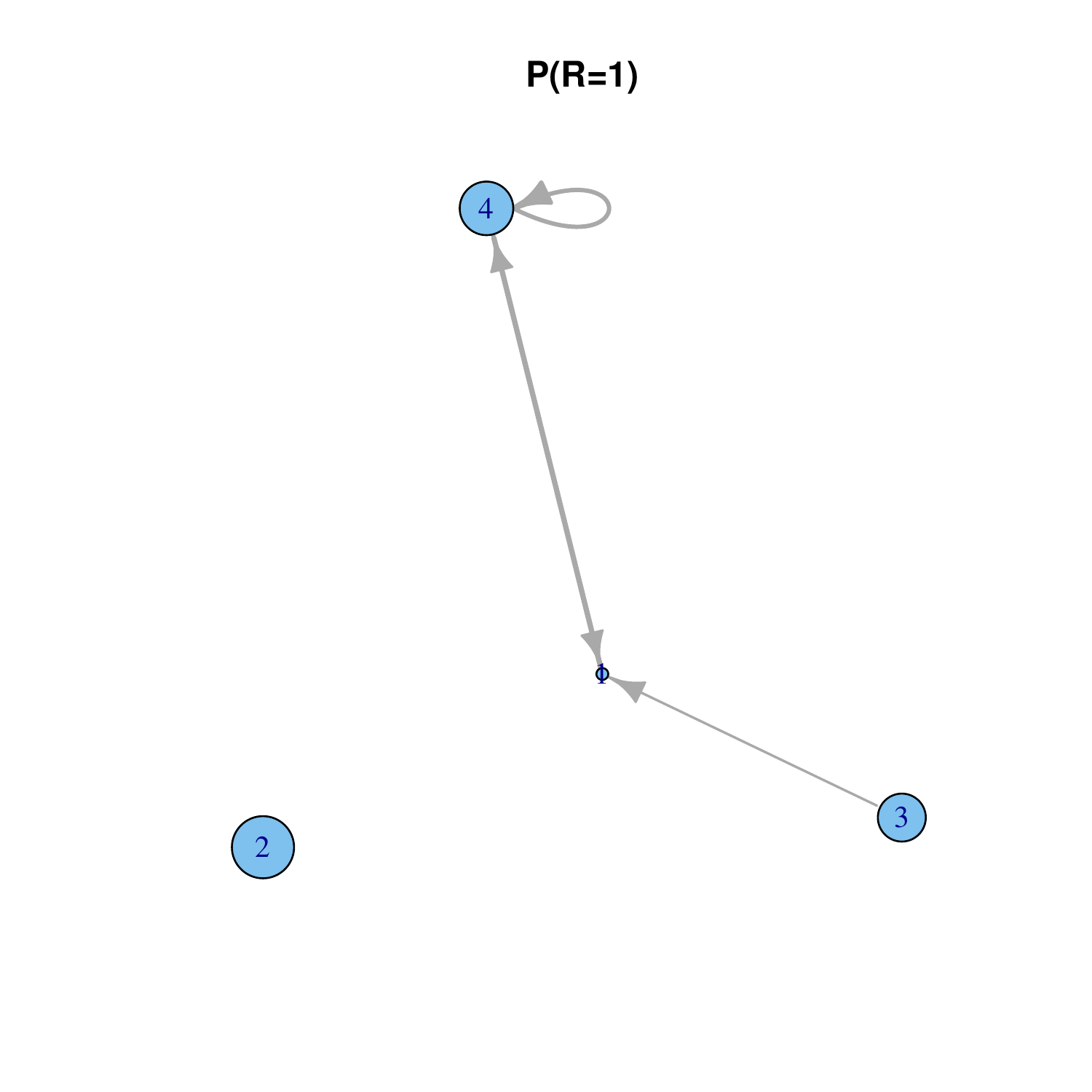}}\vspace{-2cm}\\
\includegraphics[width=7cm]{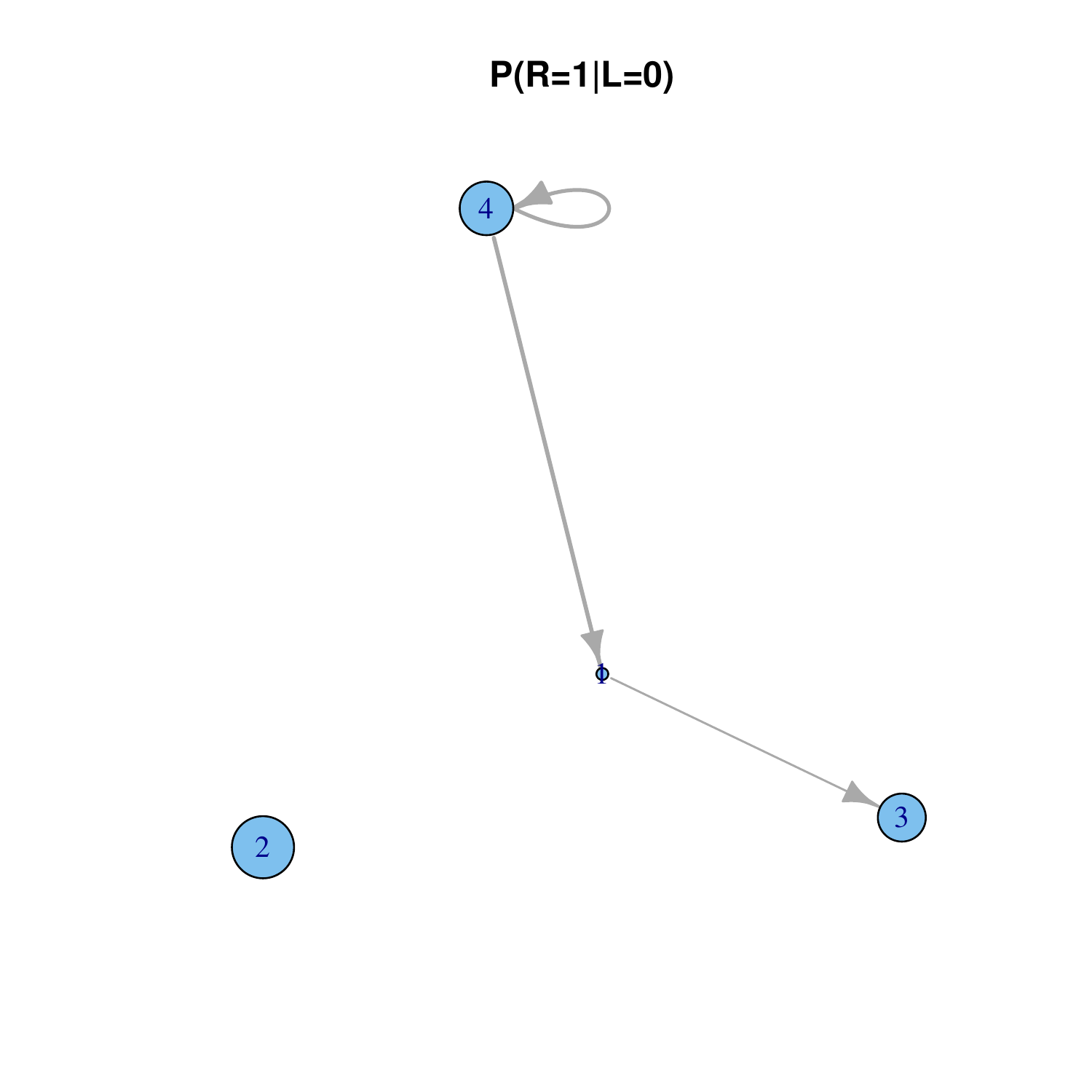}&\includegraphics[width=7cm]{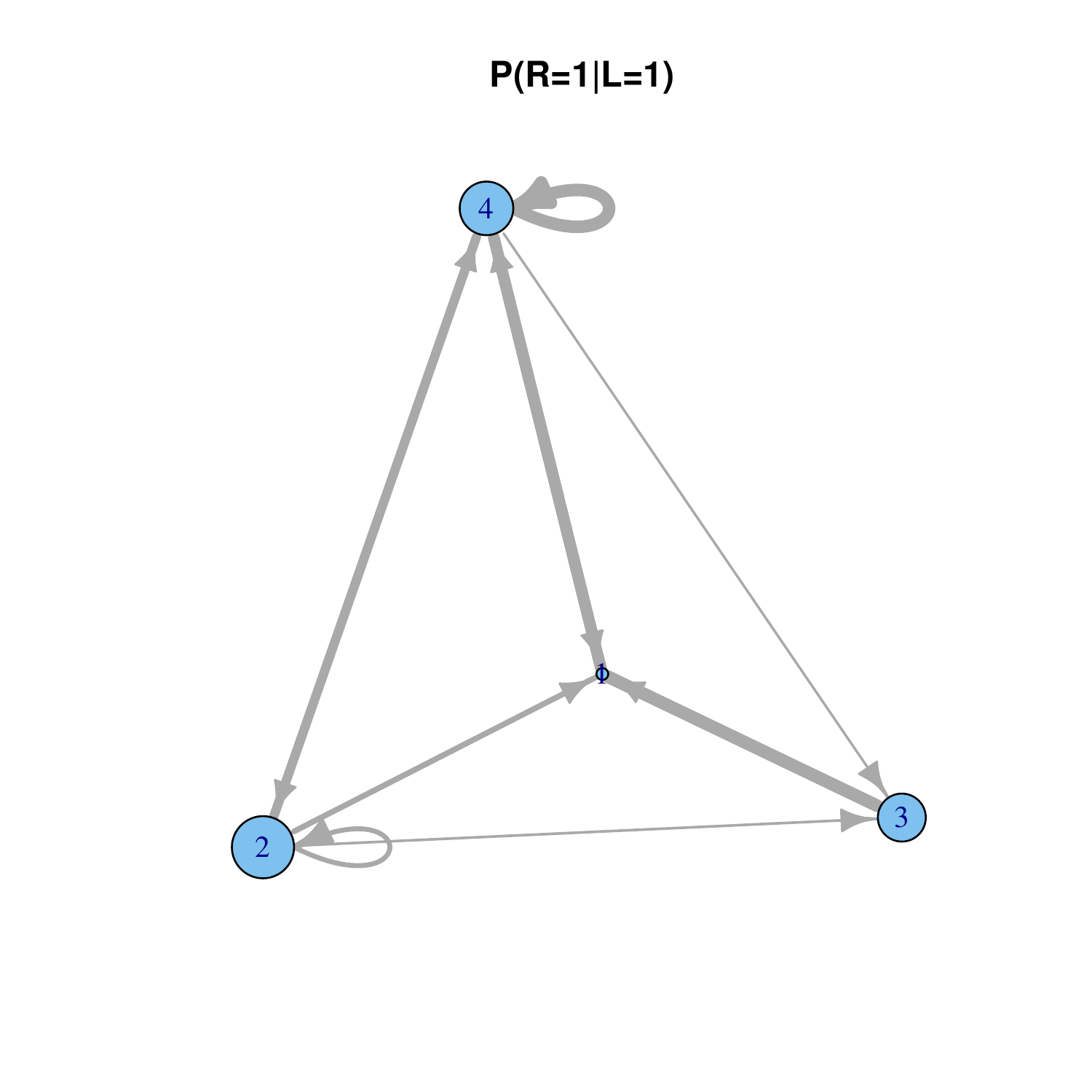}
\end{tabular}
 \end{center}
 \caption{Marginal probabilities of Researcher connections between and within blocks (top) and probabilities of Researcher
 connections between and within blocks
 conditionally on absence (bottom left-hand-side) or presence (bottom right-hand-side) of Lab connection.
 Vertex size is proportional to the block size. Edge width is proportional to the probabilities of connection; if this probability is
 smaller than 0.1, edges are not displayed.}
\label{figRconnection}
 \end{figure}

\begin{figure}[h!t!]
\begin{center}
\begin{tabular}{cc}
\multicolumn{2}{c}{\includegraphics[width=7cm]{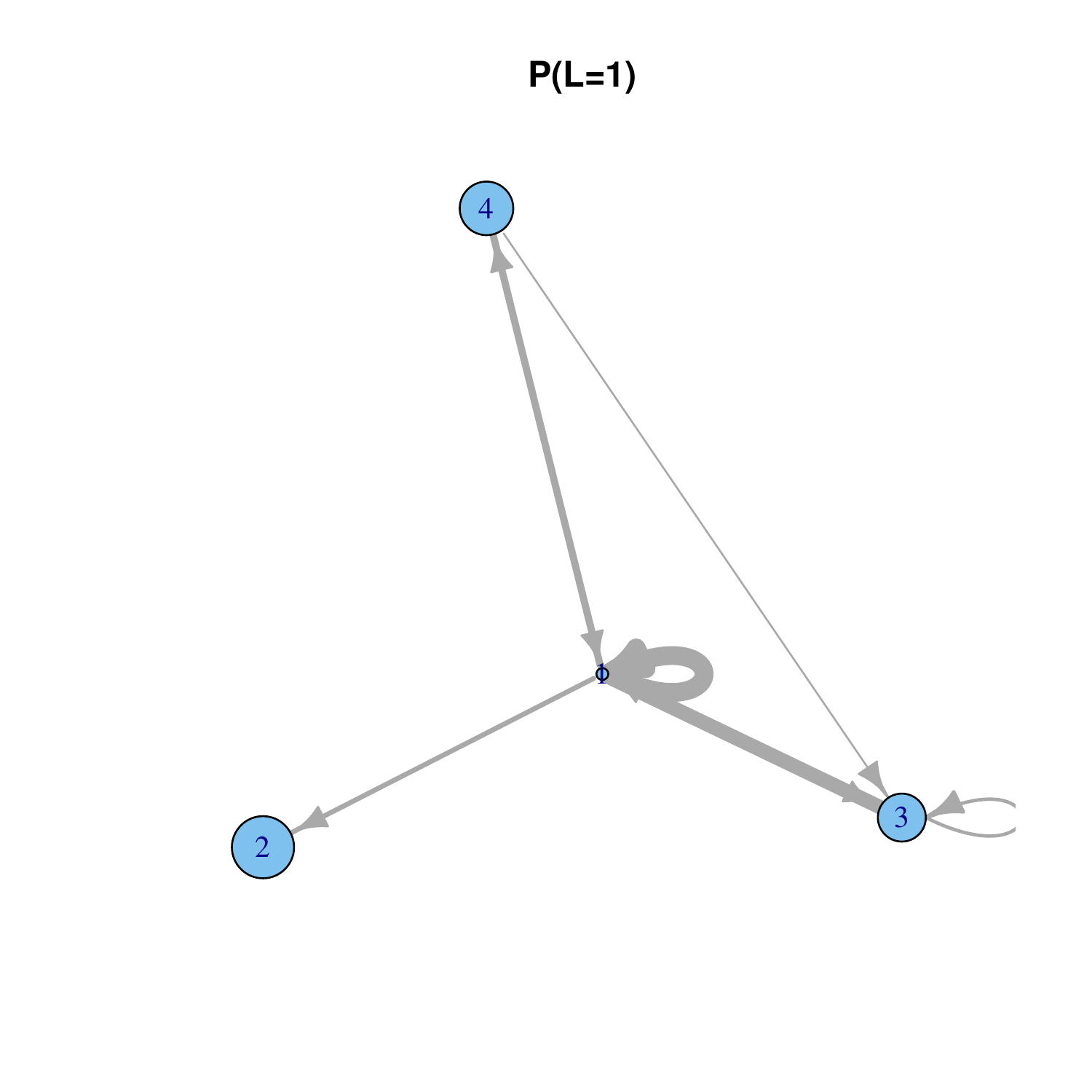}}\vspace{-2cm}\\
\includegraphics[width=7cm]{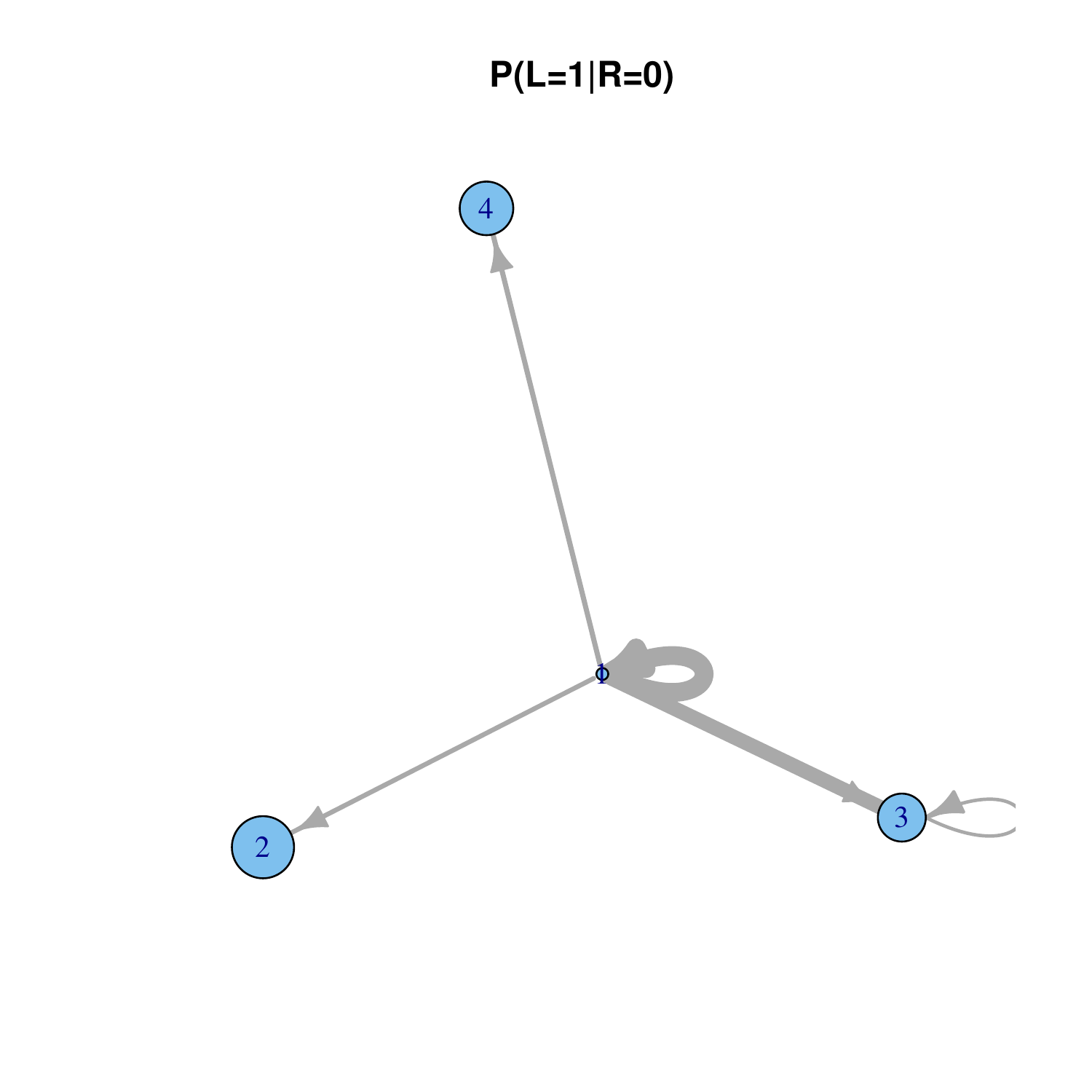}&\includegraphics[width=7cm]{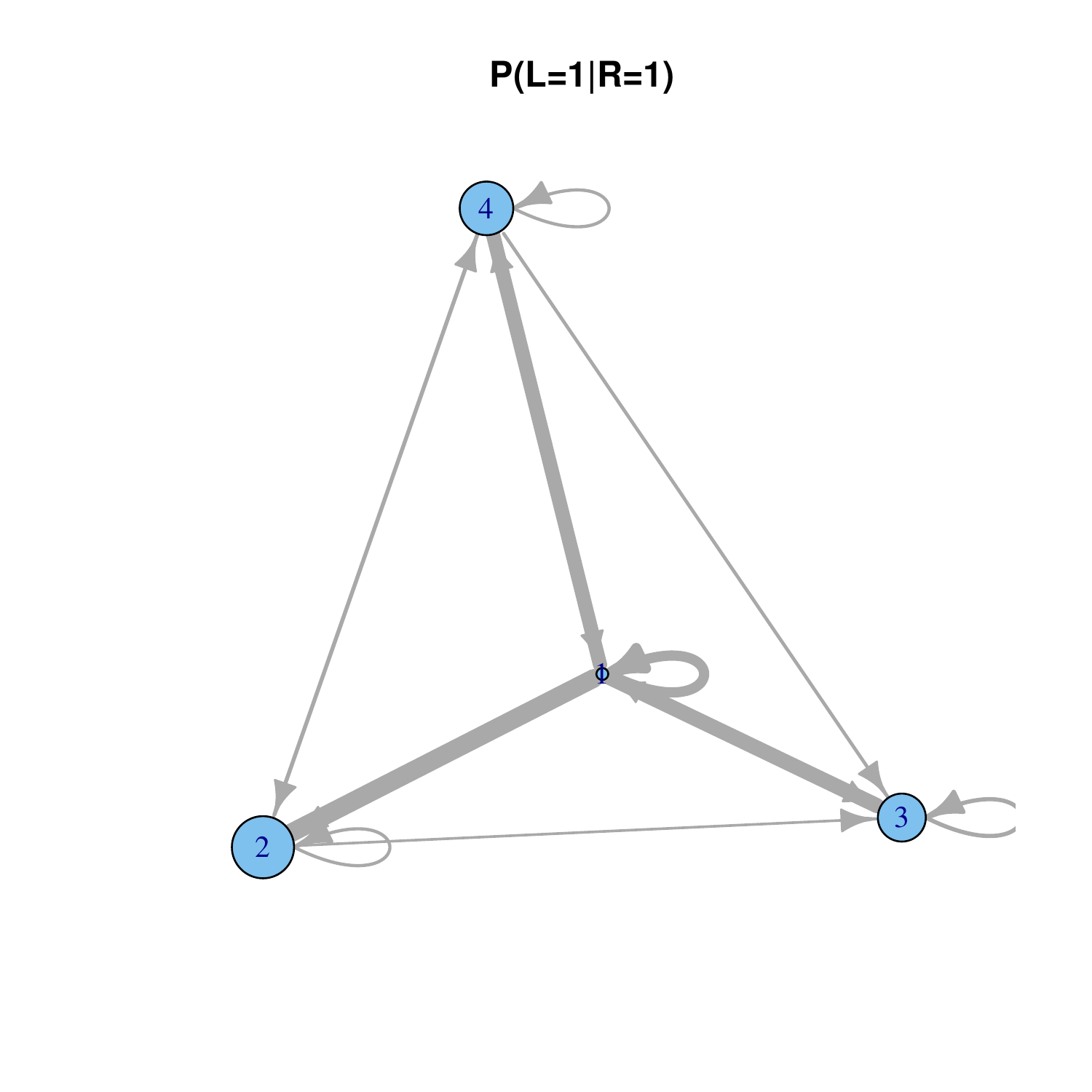}
\end{tabular}
 \end{center}
 \caption{Marginal probabilities of Lab connections between and within blocks (top) and probabilities of Lab connections between and within blocks
 conditionally on absence (bottom left-hand-side) or presence (bottom right-hand-side) of Researcher connections.
 Vertex size is proportional to the block size. Edge width is proportional to the probabilities of connection; if this probability is
 smaller than 0.1, edges are not displayed.}
\label{figLconnection}
 \end{figure}

We now discuss the results. Multiplex SBM reveals interesting structural features of the multiplex network.
More precisely, collaboration takes place in a clustered manner for both researchers and laboratories;
collaborating laboratories tend to have affiliated researchers seeking advice from one another.
Indeed, Figure \ref{figRconnection} shows that the existence of a connection (exchange of resources) between labs
clearly increases the probability of connection (sharing advice) between researchers.
The reinforcement of this probability of connection
is clearly outstanding in block 2. In this block, the researcher connections are
quite unlikely within the block or with other blocks. However, conditionally to the existence of a laboratory connection,
the researcher connections become more important especially with block 4.
In block  4, the links between researchers are strengthened given a connection  between their laboratories.
Researchers in block 3 seem to be the least affected by the connections provided by their laboratories.
The case of block 1 is quite peculiar  since it contains two researchers only.
This clustering demonstrates that not all researchers benefit on equal terms  from the institutional level. Some researchers
are more dependent on their laboratories in terms of connections.
Furthermore, Figure \ref{figLconnection} shows the likeliest connections between laboratories
are mainly with  block 1.  The fact that researchers  are sharing advice  inflates the probability of exchanging  resources between laboratories.


\begin{table}
 \caption{\label{tabmulti}(a) Size of blocks obtained by multiplex SBM.
 (b) Cross frequencies of blocks versus the lab location (in \^{I}le-de-France (IdF) or not).
(c) Cross frequencies of blocks versus the researcher's status (lab director or not).
 (d) Cross frequencies of blocks versus the researcher's specialties (PH: Public Health;
Su: Surgery; 
He: Hematology; 
ST: Solid Tumours; 
FPh: Fundamental pharmacology; 
FMo: Fundamental molecular research; 
FMoG: Fundamental molecular genetic research
).
}

\begin{tabular}{lll}
 (a)\fbox{
\begin{tabular}{rr}
 & block \\
  \hline
1 &   2 \\
  2 &  48 \\
  3 &  19 \\
  4 &  26 \\
\end{tabular}}
& \quad (b)
 \fbox{
 \begin{tabular}{rrr}
 & not idf & idf \\
  \hline
1 &   1 &   1 \\
  2 &  26 &  22 \\
  3 &  10 &   9 \\
  4 &  11 &  15 \\
\end{tabular}}
& \quad (c)
\fbox{
 \begin{tabular}{rrr}
 & not director &director \\
  \hline
1 &   1 &   1 \\
  2 &  21 &  27 \\
  3 &  12 &   7 \\
  4 &  12 &  14 \\
\end{tabular}}
\\
\\
\multicolumn{3}{c}{
(d)
\fbox{
\begin{tabular}{rrrrrrrr}
 & PH & Su & He & ST & FPh & FMo & FMoG \\
  \hline
1 &   0 &   0 &   0 &   0 &   0 &   1 &   1 \\
  2 &  12 &   5 &   6 &  10 &   7 &   7 &   1 \\
  3 &   0 &   1 &   3 &   1 &   1 &  10 &   3 \\
  4 &   6 &   1 &   7 &   4 &   2 &   2 &   4 \\
\end{tabular}}}
\end{tabular}
\end{table}

More importantly, some complex features of the within-level network structure are explained mainly by cross-level interactions.
In this empirical case, units of each level are clustered in blocks that make sense from the perspective of the categories of
people and organizations, at each level separately and together.

	Block 1 members  work in the biggest labs in terms of size (Figure \ref{figboxplot}(b)).
	SBM clusters them because they have many more relations than the other members of the network. They have among the highest indegrees,
	and average outdegrees in labs that have highest average indegrees and outdegrees (Figures \ref{figboxplot}(e), \ref{figboxplot}(f), \ref{figboxplot}(g), \ref{figboxplot}(h)).
	They have the highest performance in terms of  publication performance scores  in both periods (Figures \ref{figboxplot}(c) and \ref{figboxplot}(d)).
	In fact, they have a similar relational profile since they are providers
	of transgenic mice for the experiments of many similar colleagues.

	Block 2 members  are  among the lowest indegrees and outdegrees in labs that have the lowest average indegrees and outdegrees (Figures \ref{figboxplot}(e), \ref{figboxplot}(f), \ref{figboxplot}(g), \ref{figboxplot}(h)),
	slightly older than the others (Figure \ref{figboxplot}(a)), mainly in the smallest labs in terms of size.
	This block is heterogeneous in terms of specialties
	(especially 40\% clinicians and 25\% diagnostic/prevention/epidemiology specialists)
	except fundamental research (Table \ref{tabmulti}(d)).
	They also have among the lowest performance levels for both periods (Figures \ref{figboxplot}(c), \ref{figboxplot}(d)),
	although this is increasing.
	This is the biggest block. Their behavior may be described as fusional as proposed in \citet{Lazega2008} since
	it corresponds to individuals for whom the probabilities of connection are the most affected by
	the connections of their laboratories.

In block 3, younger fundamental researchers in laboratories carrying out fundamental research  (mostly molecular research) prevail
(Figure \ref{figboxplot}(a) and Table
\ref{tabmulti}(d)).  They have relatively low indegrees and outdegrees,
in labs that have the highest indegrees (after Block 1) and average outdegrees (Figures \ref{figboxplot}(e), \ref{figboxplot}(f), \ref{figboxplot}(g), \ref{figboxplot}(h)).
70\% are among the top performers of this population, i.e. highest performance levels after Block 1 members, for both periods (Figures \ref{figboxplot}(c), \ref{figboxplot}(d)).
Their dominant relational strategies are individualist or independentists (as shown in Figure \ref{figRconnection})
since
the probabilities of connection between researchers remain quite unchanged, no matter if their laboratories are connected.

			Block 4 is also heterogeneous in terms of specialties but its largest subgroup is composed of hematologists (clinical and fundamentalists)
	(Table \ref{tabmulti}(d)).
	 Researchers have average indegrees and outdegrees in laboratories that have relatively low indegrees
	and outdegrees and that are also of average size (Figures \ref{figboxplot}(e), \ref{figboxplot}(f), \ref{figboxplot}(g), \ref{figboxplot}(h)).
	There are proportionally
	more directors of laboratories in this block than in the others (Table \ref{tabmulti}(c)).
	Their dominant strategy can be called fusional or collectivist.
	The performance levels of the majority for both periods are somewhat mixed and average, but decreasing
	(Figures \ref{figboxplot}(c), \ref{figboxplot}(d)). In this block, researchers can take advantage of their laboratories to be connected with colleagues but
	they can also hold relations apart from their laboratories.

	In this case, SBM highlight in the  data a specific kind of block structure that provides an interesting understanding of the effect of dual positioning.
SBM mixes people and laboratories that previous analyses used to separate. It is interesting to notice, for example, that SBM partitions the population
regardless of geographic location (Table \ref{tabmulti}(b)) –a criterion that was previously shown to be meaningful to understand collective action in this research milieu.
In fact this partition may highlight the links between laboratories and between individuals that cut across
the boundaries and remoteness  created by geography, showing that certain categories of actors tend to reach across
these separations when it suits them, either as investments to prepare future collaborations,
or as a follow-up for past investments.

\begin{figure}
 \centering
\begin{tabular}{cc}
 \includegraphics[scale=.3]{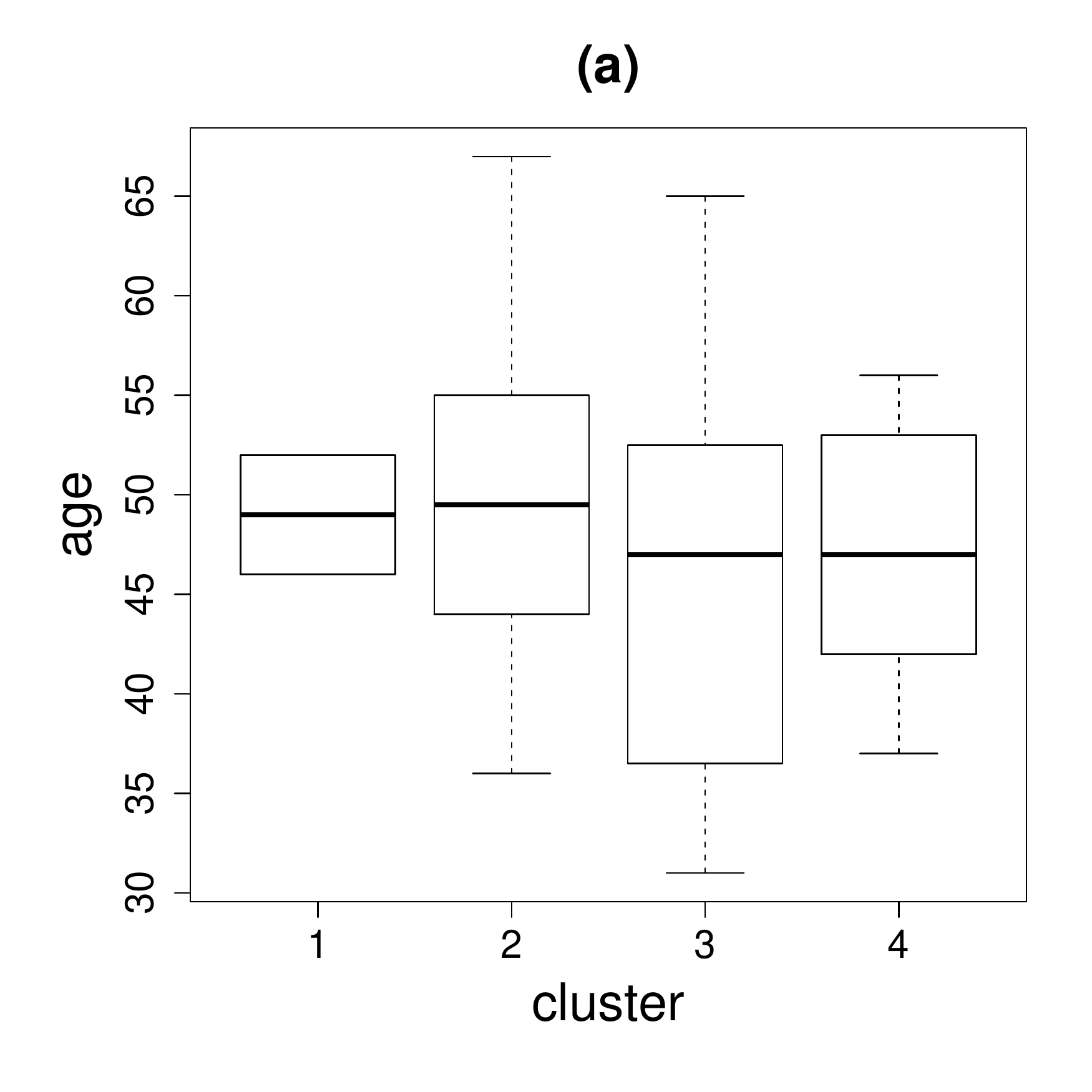} & \includegraphics[scale=.3]{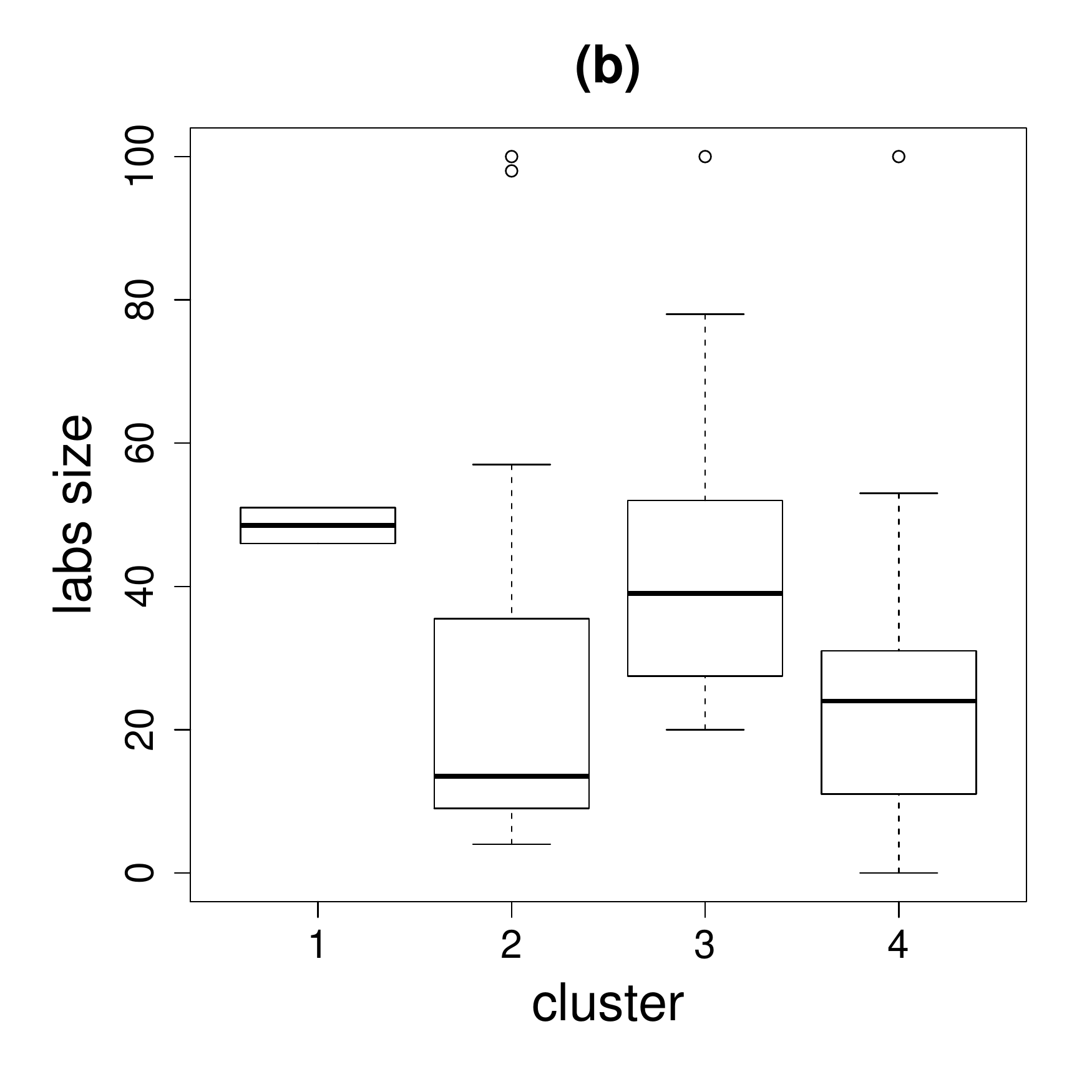}\\
 \includegraphics[scale=.3]{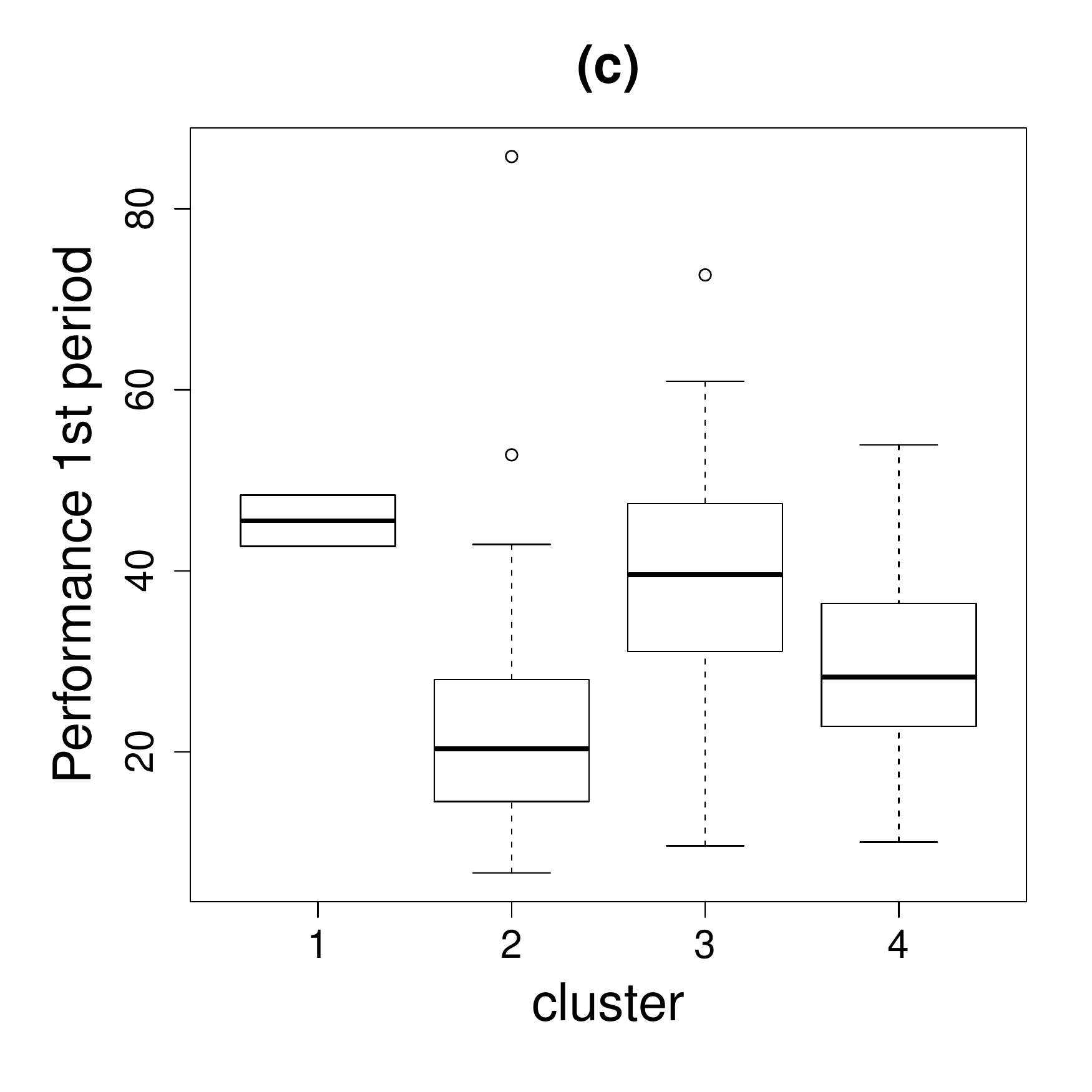}&\includegraphics[scale=.3]{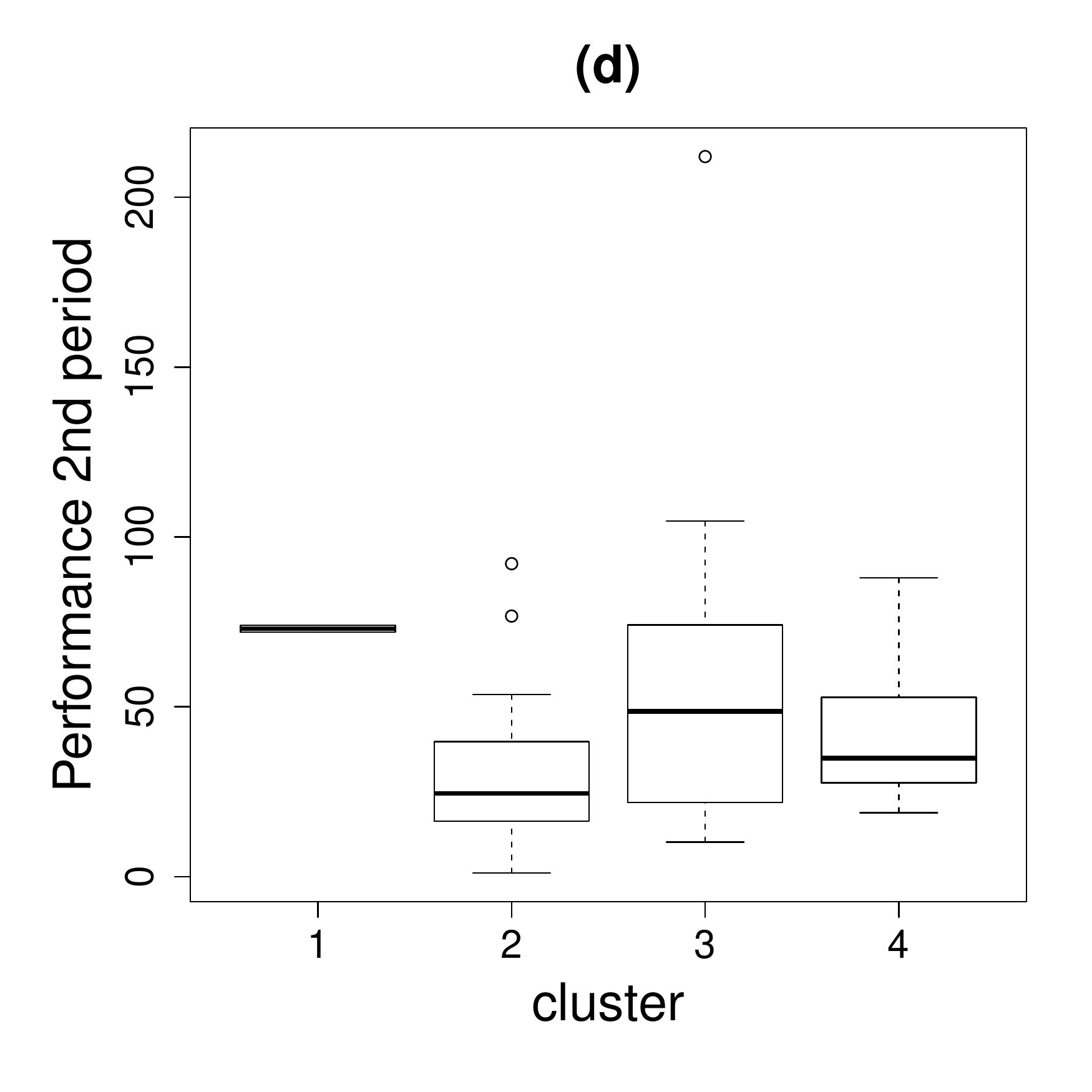}\\
   \includegraphics[scale=.3]{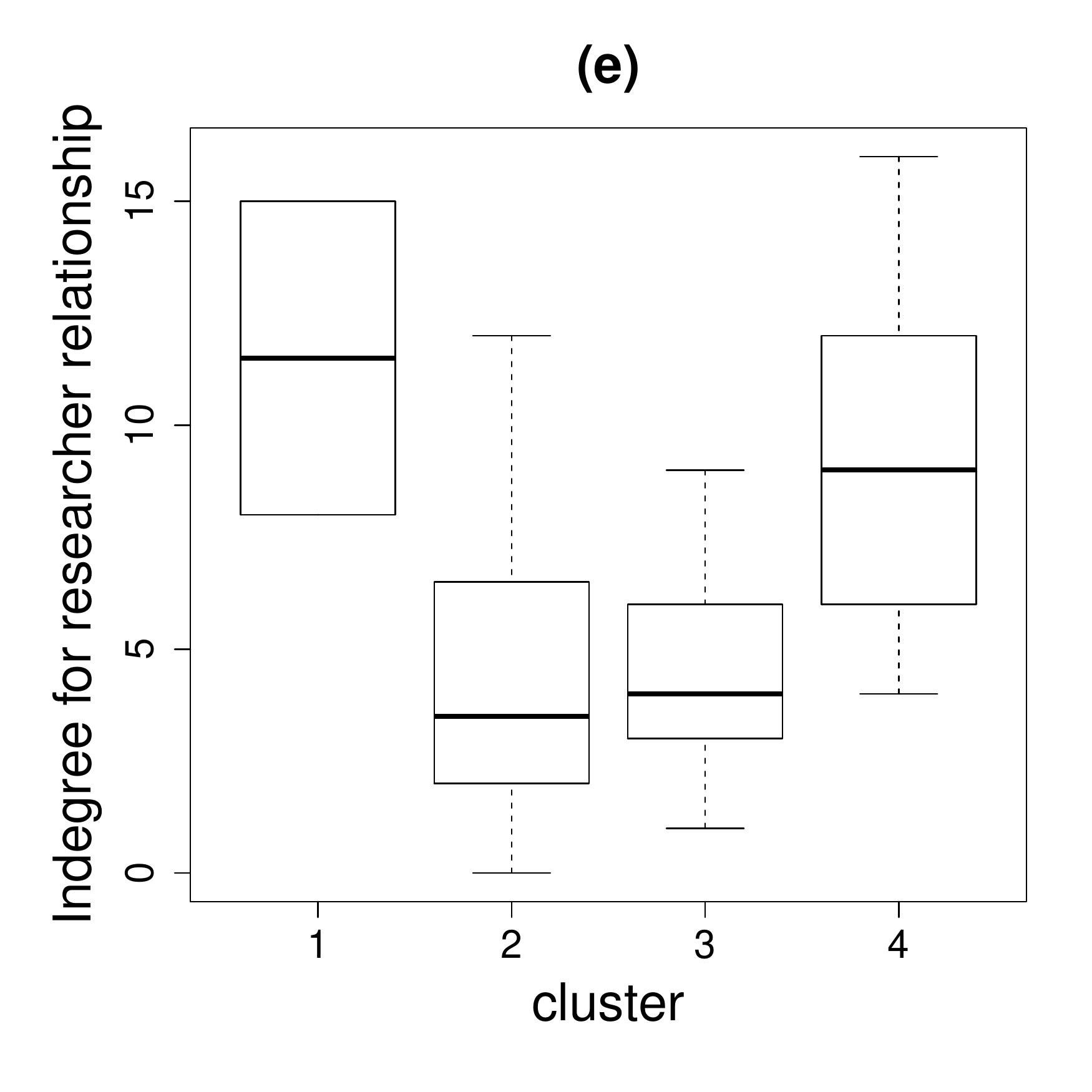}&\includegraphics[scale=.3]{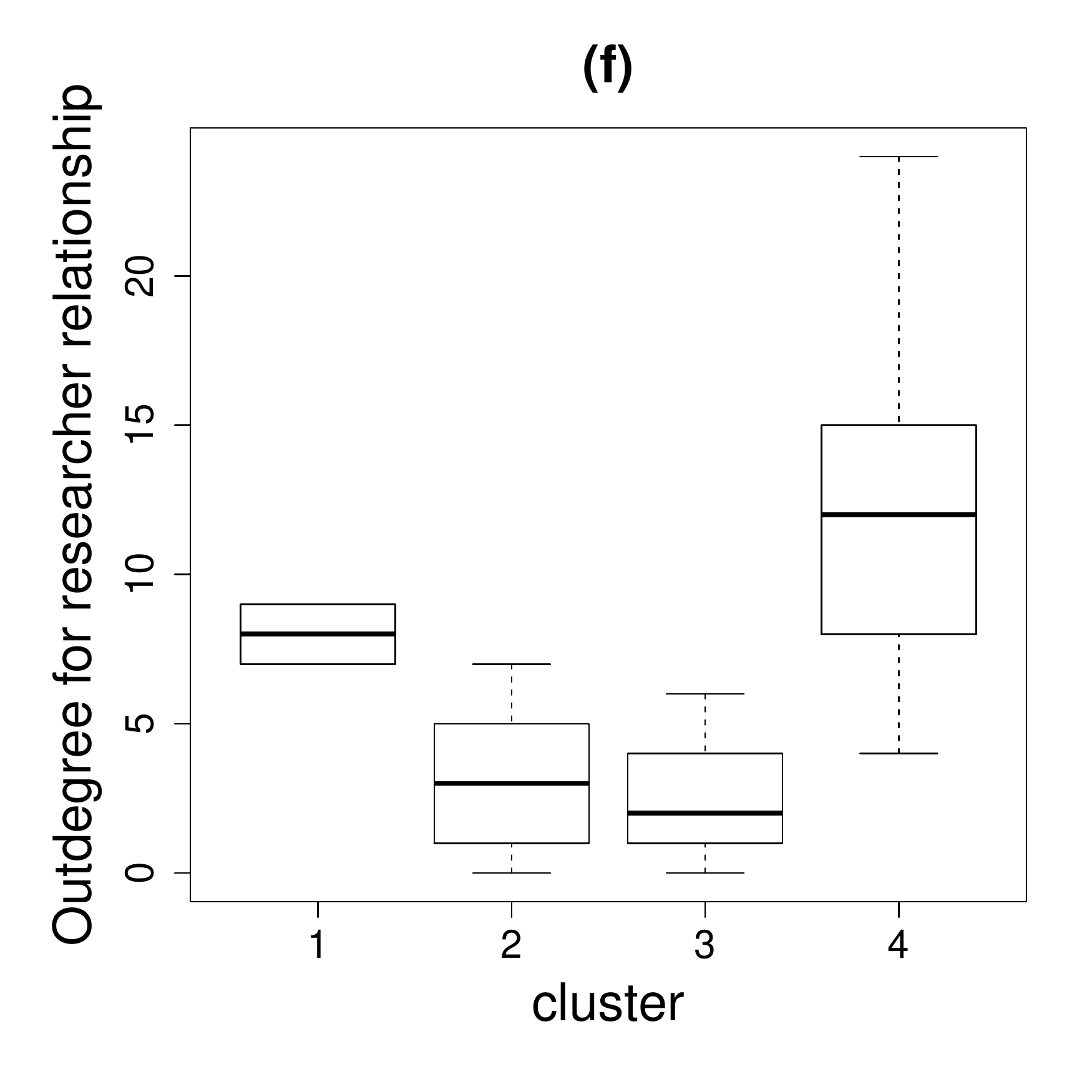}\\
   \includegraphics[scale=.3]{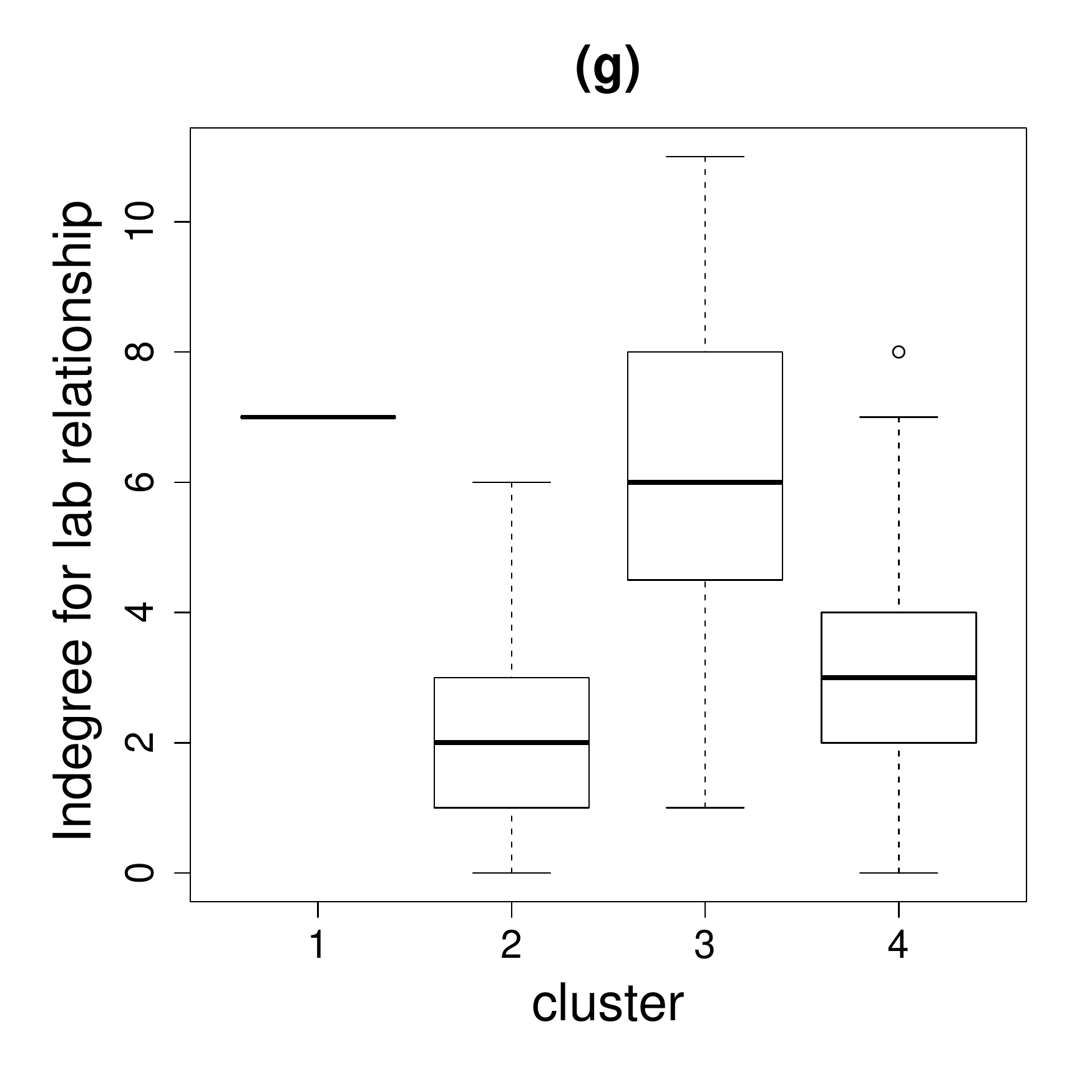}&\includegraphics[scale=.3]{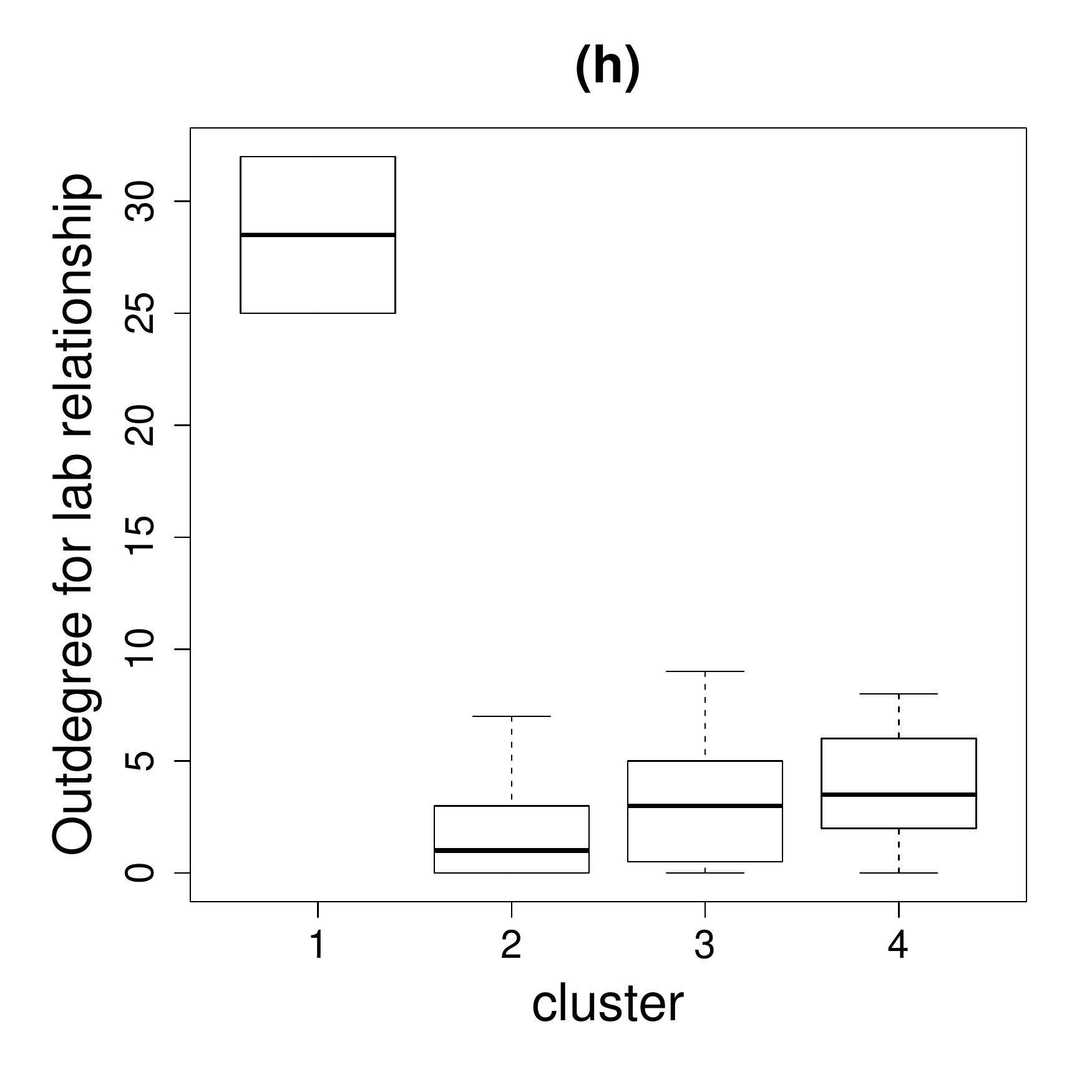}\\
\end{tabular}
  \caption{Boxplots of age (a),
  lab size (b),
  performance in 1st period (c),
  performance in 2nd period (d),
  indegree for researcher relationship (e),
  outdegree for researcher relationship (f),
  indegree for lab relationship (g) and
  outdegree for lab relationship (h)
  grouped by blocks.}
  \label{figboxplot}
\end{figure}

 \section{Discussion}\label{sec=discussion}

The essence of ‘networks’ is to help actors  cut across organizational boundaries to create new relationships \citep{baker1992network},
to identify new opportunities and, eventually, to create new organizations to use or
hoard these new opportunities \citep{lazega2012analyses,lazega2013network,tilly1998durable}. This work is a step forward a more precise comprehension of such mechanisms.

In this paper, we proposed a SBM  modelisation for multiplex networks.  This way of partitioning the graph and calculating the probabilities that two individuals are connected
takes into account more parameters in the graph than just centrality scores and size as in \citet{lazega2007poissons}.
Besides, compared to MERGMs \citep{wang2013exponential},
this method takes inhomogeneities of actors more into account because the separate blocks tend to make sense
as  blocks with a specific identity. Multiplex SBM  statistical analysis designed for the analysis of multilevel networks given as
``superposed'' \citep{lazega2007poissons} network data aims at identifying the rules that govern the formation
of links at the inter-individual level conditioned by the characteristics of the inter-organizational network.
Our analysis found a general structure that shows the ways in which these rules can be derived in this case and
help assess actor inhomogeneities in terms of their influence on parameter estimates. This assessment is new in
the sense that it shows that formation of connections in the inter-individual network does not apply to all actors in
the network identically.

In this paper, the model is applied on $K=2$ networks.  Applying with $K$ much larger would imply estimation difficulties since the number of parameters  exponentially depends  on $K$. However, particular assumption  (such as Markovian dependencies) can reduce the difficulty and lead to feasible cases.
Besides, in our example, going from multilevel to multiplex was quite natural since few laboratories contained more than one researcher. If the number of organizations were far smaller then the number of individuals than maybe it would be interesting to introduce an other kind of relation,
specifying if the individuals are in the same laboratory or not. In this case, the probabilities vector would have a special structure, taking this specificity into account (if the individuals belong to the same organization then they are automatically linked through their organization) .

In this version of the analysis, the covariates were studied a posteriori, the classification being purely done on the network. From a practical point of view, including the covariates in the model would imply estimation difficulty (and even more if 
the effect of the covariates on the connexion probabilities depends on the blocks). However, even if attractive at first sight, the choice of including   the covariates in the modeling has to be questioned.
Including them will cluster the data beyond the effect of these covariates, while not including them
 will provide a description of  blocks on the basis of the relation, may this relation be influenced by these covariates. This second strategy may lead to more interpretable results.

The first aim of \cite{Lazega2008} was cancer research management. Thus,   the covariate ``performance in terms of publication'' catches more attention.  Indeed, this modelisation can generate other questions such as the influence of the networks on the performance of the actors (publications in our case). In this work, the performance is treated as a factor to explain the relations. However, one could dream of a model which would help the actors  build the ideal network to optimize the performance.  If it is true that contemporary society is an ``organizational society''
\citep{coleman1982asymmetric,perrow1991society,presthus1962organizational} -in the sense that action and
performance measured at the individual level strongly depend on the capacity of the actor to construct and
to use organizations as instruments, and thus to manage his/her interdependencies at different levels in a strategic manner-,
then the study of interdependencies jointly at the inter-individual and the inter-organizational level is
important for numerous sets of problems. Proposing a hierarchical  model in this direction is out of the scope of the paper but is clearly a possible extension.

To conclude, application to sociological network analysis will clearly benefit from this methodological contribution.
In spite of several limitations listed before,  this analysis of multiplex networks seems
therefore adapted to certain types of questions that sociologists ask when they try to combine both individual and
contextual factors in order to estimate the likelihood of an individual or a group to adopt a given behavior or to reach a
given level of performance. More generally, this approach explores a complex meso-social level of accumulation,
of appropriation and of sharing of multiple resources. This level, still poorly known, is difficult to observe
without a structural approach.
\\

\newpage
\appendix
\section{Identifiability of the multiplex SBM and convergence of the maximum likelihood estimates} \label{app:identif}

\subsection{Identifiability}

\cite{celisse:daudin:laurent:2012} have proved the identifiability of the parameters for the uniplex Bernoulli Stochastic Block models. We extend   their results to  the multiplex SBM.
First we  recall the notations : $\forall (i,j) \in \{1,\dots,n\}^2$, $i\neq j$, $\forall (q,l) \in \{1,\dots,Q\}^2$,$\forall w \in \{0,1\}^K$, we set:
\begin{eqnarray*}
\pi^{(w)}_{ql} &= & \P(X^{1:K}_{ij}=w |Z_i=q,Z_j=l)\\
\alpha_q& =& P(Z_i = q)
\end{eqnarray*}

Let  us introduce $\bpi^{(w)} = (\pi_{ql}^{(w)})_{(q,l) \in \{1 \dots Q\}}$ and $\bpi = (\bpi^{(w)})_{w \in \{0,1\}^K}$.  $\balpha = (\alpha_1, \dots,\alpha_Q)$. Theorem \ref{theo:identif} sets the identifiability of $\theta=(\balpha,\bpi)$ in  the  multiplex SBM.

 \begin{theorem}\label{theo:identif}
Let $n\geq 2Q$.  Assume that for any $q \in \{1, \dots Q\}$, $\alpha_q>0$ and for every $w \in \{0,1\}^K$, the coordinates of $ \boldsymbol{r}^{(w)}(\theta) = \bpi^{(w)} \cdot \balpha$ are distinct. Then, the multiplex SBM parameter $\theta = (\balpha,\bpi)$ is identifiable.
 \end{theorem}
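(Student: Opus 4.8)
The plan is to prove identifiability in the usual block-model sense, namely up to a global relabelling of the $Q$ blocks, by reconstructing $\theta=(\balpha,\bpi)$ from the law of the observed multiplex $\Xall$. The backbone of the argument is the remark that, conditionally on $Z_i=q$, the family $(X^{1:K}_{ij})_{j\neq i}$ is i.i.d. with the categorical law $(r^{(w)}_q)_{w\in\{0,1\}^K}$ on $\{0,1\}^K$, where $r^{(w)}_q=\sum_{l}\pi^{(w)}_{ql}\alpha_l$ is the $q$-th coordinate of $\boldsymbol r^{(w)}(\theta)=\bpi^{(w)}\cdot\balpha$. Hence a single ``row'' of the multiplex is a finite mixture of $Q$ product distributions, and the hypothesis that the coordinates of $\boldsymbol r^{(w)}(\theta)$ are distinct says precisely that these mixture components differ coordinatewise. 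I would fix once and for all a reference label $w^\star\in\{0,1\}^K$ and write $t_q:=r^{(w^\star)}_q$, so that $t_1,\dots,t_Q$ are distinct and the weights $\alpha_q$ are positive.

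First I would recover $\balpha$ and $t_1,\dots,t_Q$, thereby fixing the labelling of the blocks. Choosing $2Q-1$ pairwise distinct nodes $a_1,\dots,a_{2Q-1}$ all different from $i$ --- available since $n\geq 2Q$ --- then conditioning on $Z_i$ and using the conditional independence above gives, for $m=0,1,\dots,2Q-1$,
\begin{equation}
\nonumber
p_m:=\P\!\left(X^{1:K}_{i a_1}=\cdots=X^{1:K}_{i a_m}=w^\star\right)=\sum_{q=1}^{Q}\alpha_q\, t_q^{\,m}.
\end{equation}
These are the power sums of the discrete measure $\sum_q\alpha_q\delta_{t_q}$ with $Q$ atoms; since the $t_q$ are distinct and the $\alpha_q$ positive, the values $p_0=1,p_1,\dots,p_{2Q-1}$ determine the atoms $t_q$ and the weights $\alpha_q$ up to permutation by the classical Prony/Hankel moment-problem argument. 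Ordering the $t_q$ breaks the permutation symmetry and fixes a labelling of $\{1,\dots,Q\}$.

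Next I would recover each connectivity matrix $\bpi^{(w)}$. Here I would use two distinguished nodes $i,j$ together with $Q-1$ extra neighbours $a_1,\dots,a_{Q-1}$ of $i$ and $Q-1$ extra neighbours $b_1,\dots,b_{Q-1}$ of $j$, all $2Q$ indices distinct (again exactly the regime $n\geq 2Q$). Conditioning on $(Z_i,Z_j)$, and using $\P(Z_i=q,Z_j=l)=\alpha_q\alpha_l$ together with the factorisation over the conditionally independent edges, yields for $0\leq m,m'\leq Q-1$
\begin{equation}
\nonumber
T^{(w)}_{m,m'}:=\P\!\left(X^{1:K}_{ij}=w,\ X^{1:K}_{i a_1}=\cdots=X^{1:K}_{i a_m}=w^\star,\ X^{1:K}_{j b_1}=\cdots=X^{1:K}_{j b_{m'}}=w^\star\right)=\sum_{q,l}\alpha_q\alpha_l\,\pi^{(w)}_{ql}\, t_q^{\,m}\,t_{l}^{\,m'}.
\end{equation}
Writing $V=(t_q^{\,m})_{0\leq m\leq Q-1,\,1\leq q\leq Q}$ for the Vandermonde matrix built from the now-known distinct reals $t_q$, and $A^{(w)}=(\alpha_q\alpha_l\,\pi^{(w)}_{ql})_{q,l}$, this reads $T^{(w)}=V A^{(w)}V^{\top}$. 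Since the $t_q$ are distinct, $V$ is invertible, so $A^{(w)}=V^{-1}T^{(w)}(V^{\top})^{-1}$ is identified; dividing by $\alpha_q\alpha_l>0$ recovers every $\pi^{(w)}_{ql}$, and repeating over $w\in\{0,1\}^K$ recovers $\bpi$.

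The step I expect to be the crux is the reconstruction of the matrices $\bpi^{(w)}$: one must set up the bilinear moment array $T^{(w)}_{m,m'}$, verify carefully that conditioning on $(Z_i,Z_j)$ makes the edge $X^{1:K}_{ij}$ and the two neighbour-blocks genuinely factorise (which forces all $2Q$ node indices to be distinct, and is exactly where $n\geq 2Q$ is consumed), and recognise the Vandermonde structure that makes the inversion possible. The distinctness of the coordinates of $\boldsymbol r^{(w^\star)}(\theta)$ and the positivity of $\balpha$ are each used twice --- to run the Prony recovery and to invert $V$ and divide out $\alpha_q\alpha_l$ --- and, notably, only a single reference $w^\star$ is needed, which is supplied a fortiori by the stated hypothesis. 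Everything else is the bookkeeping already carried out by \cite{celisse:daudin:laurent:2012} in the uniplex Bernoulli case, the only change being that the scalar edge label is replaced by a categorical label in $\{0,1\}^K$.
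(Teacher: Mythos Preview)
Your proof is correct and follows essentially the same route as the paper's, which itself is the direct multiplex adaptation of \cite{celisse:daudin:laurent:2012}: recover $(\balpha,\boldsymbol r^{(w)})$ from the Hankel/Prony system built on the one-row moments $p_m=\sum_q\alpha_q t_q^{\,m}$, and then recover each $\bpi^{(w)}$ by inverting a Vandermonde in a bilinear moment identity of the form $V A^{(w)} V^{\top}$. The only noteworthy difference is a mild sharpening: you build the Vandermonde once from a single reference label $w^\star$ and reuse it for every $w$, whereas the paper sets up $R^{(w)}$ and $U^{(w)}$ separately for each $w$; this shows that the distinct-coordinate hypothesis is in fact needed only for one $w$, not for all of them.
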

 \begin{proof}
 The proof given by  \cite{celisse:daudin:laurent:2012} can be directly extended to our case leading to the expected result.
 As explained in the original paper, the identifiability can be proved using algebraic tools.
 The only difference in the multiplex context is that the proof has to be applied for any type of tie $w \in \{0,1\}^K$.

For any  $w \in \{0,1\}^Q$, we set  $r_q^{(w)}(\theta)$ the probability for a member of block $q$ to have a tie of type $w$ with an other individual: $r_q^{(w)}(\theta) = \sum_{l=1}^Q \pi^{(w)}_{ql} \alpha_l $.  Let $R^{(w)}(\theta)$ be the $Q$-square matrix such that  $ R_{iq}^{(w)}(\theta) = (r_{q}^{(w)}(\theta))^i$  for $i=0\dots Q-1$ and $q=1\dots Q$.   $R^{(w)}$ is a Vandermonde matrix, which is invertible by assumptions on the coordinates of  $ \boldsymbol{r}^{(w)}(\theta)$. Now, for any $w$ and  any $i=0 \dots 2Q-1$,  we set:
 \begin{equation}\label{eq:u}
 u_{i}^{(w)}(\theta) = \P(X^{1:K}_{11}= \dots = X^{1:K}_{1i} = w; \theta) =  \sum_{q=1}^Q \alpha_q (r^{(w)}_q(\theta))^i\,,
 \end{equation}
 and $M^{(w)}(\theta)$ is a $(Q+1) \times Q$ matrix such that
  \begin{equation}\label{eq:M}
  M^{(w)}_{ij}(\theta) = u_{i+j}^{(w)}(\theta), \quad i=0\dots Q, j=0\dots Q-1\,.
  \end{equation}
 For any $i=0,\dots,Q$ we define  the $Q$-square matrix $M^{i(w)}(\theta)$ by removing line $i$ from this matrix. In particular,
 \begin{equation}\label{eq:M2}
 M^{Q(w)}(\theta) = R^{(w)}(\theta)A(\theta) R^{(w)}(\theta)^{\intercal}\,,
 \end{equation} where $A(\theta)$ is the $\balpha$-diagonal matrix.
All the $\alpha_q$ being non-null and $R^{(w)}(\theta)$ being invertible, then $\det(M^{Q(w)})>0$.
So, if we define:
\begin{equation}
 \nonumber
 B(X;\theta) = \sum_{i=0}^{Q} (-1)^{i+Q} \det(M^{i(w)}(\theta))X^i\,.
\end{equation}
$B$ is of degree $Q$.
Let us define $V^{i(w)}(\theta)=(1,r_i^{(w)}(\theta),  \dots ,(r_i^{(w)}(\theta))^Q)$, then
\begin{equation}
 \nonumber
 B(r_i^{(w)}(\theta);\theta) = \det \left(M^{(w)}(\theta),V_i^{(w)}(\theta)\right)\,,
\end{equation}
 where  $\left(M^{(w)}(\theta),V_i^{(w)}(\theta)\right)$ is a $(Q+1)$ square matrix. The columns of $M$ being linear combinations of the $V_i^{(w)}$, we obtain  $B(r_i^{(w)}(\theta);\theta)=0$ for any $i = 0 \dots Q-1$. So, we can factorize $B$ as
 \begin{equation}\label{eq:B}
 B(x;\theta) =\det(M^{Q(w)}(\theta)) \prod_{i=0}^{Q-1}(x-r_i^{(w)}(\theta))\,.
 \end{equation}

 Now assume that $\theta = (\bpi,\alpha) $ and $\theta'=(\bpi',\alpha')$ are two sets or parameters such that  for any multiplex graph $\Xall$,
 $\ell(\Xall; \theta) = \ell(\Xall; \theta')$. Consequently,  from equation (\ref{eq:u}),
 we get $ u_{i}^{(w)}(\theta) =  u_{i}^{(w)}(\theta')$ and from equation (\ref{eq:M}), we deduce $M^{i(w)}(\theta) = M^{i(w)}(\theta')$ for any $i=0, \dots, Q-1$.  The polynomial function $B$ depending on the determinant of the $M^{i(w)}(\theta)$'s, we have $B(\cdot; \theta) = B(\cdot; \theta')$,
 leading to $r_i{(w)}(\theta) = r_i{(w)}(\theta')$ for all $i=0, \dots, Q-1$ thanks to equation (\ref{eq:B}). Thus $R^{(w)}(\theta) = R^{(w)}(\theta')$ and
\begin{equation}
\nonumber
A(\theta) = (R^{(w)}(\theta)^{\intercal})^{-1}  M^{Q(w)}(\theta)  R^{(w)}(\theta)  =A(\theta')\,.
\end{equation}
As a consequence, $\balpha = \balpha'$. \\
 Finally, let $U^{(w)}_{i,j}$ ($0\geq i,j,\geq Q$) denote :
\begin{equation}
\nonumber
U^{(w)}_{i,j} = \P(X^{1:K}_{1,k} = w, k= 1, \dots, i+1  \mbox{ and } k = n-j+1, \dots,n)\,. 
\end{equation}
We can write the $Q\times Q$ matrix $U^{(w)}(\theta)$ as  $U^{(w)}(\theta) = R^{(w)}(\theta) A(\theta) \bpi^{(w)} A(\theta) (R^{(w)}(\theta))^{\intercal}$.
 Using the fact that $ R^{(w)}(\theta) = R^{(w)}(\theta') $ and $A(\theta)  =  A(\theta')$,
 we  get
 \begin{equation}
  \nonumber
U^{(w)}(\theta)  = U^{(w)}(\theta') \Rightarrow  \bpi^{(w)} =  (\bpi'^{(w)}),\quad \forall w \in \{0,1\}^K\,.  
 \end{equation}
And the theorem is demonstrated.

\end{proof}

\subsection{Consistency  of the maximum likelihood estimator in multiplex SBM models}

We study the asymptotic properties of the maximum likelihood estimator in multiplex SBM models. Let $\A_1$, $\A_2$ and $\A_3$ be   the following assumptions: \\
 \emph{Assumption $\A_1$}: for every $q \neq q'$, there exists $l \in \{1, \dots ,Q\}$ such that $  \boldsymbol{\pi}_{ql} \neq \boldsymbol{\pi}_{q'l}$, or $   \boldsymbol{\pi}_{lq} \neq \boldsymbol{\pi}_{lq'} $. \\
 \emph{Assumption $\A_2$}: there exists $\zeta>0$  such that $\forall (q,l) \in {1,\dots,Q}$, $\pi^{(w)}_{ql} \in]0,1[ \Rightarrow \pi^{(w)}_{ql} \in [\zeta, 1-\zeta]$.\\
  \emph{Assumption $\A_3$}: there exists $\gamma \in \, (0,1/Q)$  such that $\forall q \in {1,\dots,Q}$, $\alpha_q \in [\gamma, 1-\gamma]$.\\

Let $\Xall$ be a realization  from the multiplex SBM: $\mathbb{P}(\; \cdot \;  ; \bZ^*,\balpha^*,\bpi^*)$ where $\bZ^*=(Z^*_1,\dots,Z^*_n)$ is the true group label  sequence  and $(\balpha^*,\bpi^*)$ are  the true parameters.  Let $(\widehat{\balpha},\widehat{\bpi}) $ be the maximum likelihood estimator defined as:
\begin{equation}
 \nonumber
(\widehat{\balpha},\widehat{\bpi})  =  \mbox{Argmax}_{(\balpha,\bpi)} \mathcal L(\Xall; \balpha,\bpi)\,,
 \end{equation}
where
\begin{eqnarray*}
\ell(\Xall; \balpha,\bpi) &=& \sum_{\bZ\in\{1,\dots Q\}^n} e^{\ell_1(\Xall | \bZ;\bpi)} \prod_{i=1}^n \alpha_{Z_i}\,,  \\
 \ell_1(\boldsymbol{ X }| \bZ; \bpi)&=& \sum_{i,j, i\neq j}^n \log \P(X^{1:K}_{ij} | Z_i, Z_j;\bpi)=\sum_{i,j, i\neq j}^n \sum_{w \in \{0,1\}^K} \ind_{X^{1:K}_{ij}=w} \log \pi^{(w)}_{Z_i Z_j}\,.
 \end{eqnarray*}

 \begin{theorem}\label{theo:consist}
  Let ($\A_1$), ($\A_2$) and ($\A_3$) hold. Then, for any distance $d(\cdot, \cdot)$ on the set of  parameter $\bpi$,  we have:
 \begin{equation}
  \nonumber
d(\widehat{\bpi} , \bpi^*) \xrightarrow[n \rightarrow \infty]{\mathbb{P}} 0\,.  
 \end{equation}
 Moreover, assume that $\|\widehat{\bpi} - \bpi^*\|_{\infty} = o_{\mathbb{P}}(\sqrt{\log(n)}/n)$  then,
  for any distance $d(\cdot,\cdot)$  in $\mathbb{R}^Q$,
\begin{equation}
 \nonumber
 d(\widehat{\balpha} , \balpha^*)\xrightarrow[n \rightarrow \infty]{\mathbb{P}} 0\,.
\end{equation}
      \end{theorem}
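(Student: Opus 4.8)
The plan is to transpose the argument of \cite{celisse:daudin:laurent:2012} to the multiplex setting, treating the two assertions separately. For the consistency of $\widehat{\bpi}$, I would first work conditionally on the true labels $\bZ^*$ and exploit the fact that, given $\bZ$, the vectors $(X^{1:K}_{ij})_{i\neq j}$ are independent, each following a categorical law on the $2^K$ values $w\in\{0,1\}^K$ with probabilities $\pi^{(w)}_{Z_iZ_j}$. For any candidate labelling $\bz$, profiling the complete log-likelihood $\ell_1(\bX\mid\bz;\bpi)$ over $\bpi$ yields the empirical block frequencies
\[
\widehat{\pi}^{(w)}_{ql}(\bz)=\frac{\sum_{i\neq j}\ind_{\{z_i=q,z_j=l\}}\ind_{\{X^{1:K}_{ij}=w\}}}{\sum_{i\neq j}\ind_{\{z_i=q,z_j=l\}}}\,.
\]
Evaluated at $\bz=\bZ^*$, these are averages of $\Theta(n^2)$ conditionally independent bounded terms, so a Hoeffding/Bernstein bound gives $\widehat{\pi}^{(w)}_{ql}(\bZ^*)\to\pi^{*(w)}_{ql}$ with exponential control. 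The only genuinely new point relative to the uniplex case is that the per-edge statistic is multinomial over $2^K$ categories rather than Bernoulli; the concentration inequalities are insensitive to this, being applied for each fixed $w$ separately.

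The observed likelihood is a sum over the $Q^n$ configurations $\bz$, and the core task is to show that this sum is dominated, up to a relabelling of the blocks, by the contribution of the true configuration. I would bound the log-likelihood difference $\ell_1(\bX\mid\bz;\bpi)-\ell_1(\bX\mid\bZ^*;\bpi^*)$ for an arbitrary $\bz$: by assumption $(\A_1)$ the rows and columns of $\bpi$ distinguish any two blocks, so any labelling $\bz$ not equal to $\bZ^*$ up to permutation mixes edge distributions whose expected log-likelihood gap is a strictly positive Kullback--Leibler divergence, bounded below via $(\A_2)$ (which keeps all non-degenerate $\pi^{(w)}_{ql}$ away from $0$ and $1$). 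This divergence accumulates over the $\Theta(n^2)$ edges and, combined with $(\A_3)$ on the block proportions, overwhelms the $O(n\log Q)$ entropy produced by summing over configurations; this is the same distinguishability that underlies Theorem \ref{theo:identif}. Hence $\widehat{\bpi}$ is forced close to $\widehat{\bpi}(\bZ^*)$, and therefore to $\bpi^*$, and since the parameter space is finite-dimensional the convergence holds for every distance $d$. I expect this combinatorial domination step to be the main obstacle, since it is the part of \cite{celisse:daudin:laurent:2012} that rests on the most delicate concentration and union-bound estimates, and it must now be carried out uniformly over the $2^K$ edge types.

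For the second assertion I would use the additional rate $\|\widehat{\bpi}-\bpi^*\|_{\infty}=o_{\mathbb{P}}(\sqrt{\log n}/n)$ to guarantee that the conditional law of $\bZ$ given $\bX$, evaluated at $\widehat{\bpi}$, concentrates on $\bZ^*$ up to permutation. The point is that the per-node log-likelihood gaps separating $\bZ^*$ from any singly-misclassified configuration are of order $n$, with fluctuations of order $\sqrt{n\log n}$ after a union bound over the $n$ nodes; plugging in $\widehat{\bpi}$ instead of $\bpi^*$ perturbs each such gap by only $O(n\|\widehat{\bpi}-\bpi^*\|_{\infty})=o_{\mathbb{P}}(\sqrt{\log n})$, which is negligible and hence does not spoil the recovery. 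Once the posterior mass is concentrated on the true labelling, $\widehat{\alpha}_q$ is asymptotically the empirical proportion $\tfrac1n\sum_i\ind_{\{Z^*_i=q\}}$, which converges to $\alpha^*_q$ by the law of large numbers under $(\A_3)$; finiteness of the dimension again yields convergence for any distance on $\mathbb{R}^Q$.
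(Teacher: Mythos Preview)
Your proposal is correct and aligns with the paper's own approach: the paper does not give a self-contained proof either, but simply states that the argument of \cite{celisse:daudin:laurent:2012} carries over once the Bernoulli edge law is replaced by its $2^K$-categorical version, noting that the central quantity $\log\frac{p(\bZ\mid\Xall;\balpha,\bpi)}{p(\bZ^*\mid\Xall;\balpha,\bpi)}$ now involves a sum of $2^K$ terms instead of two without any additional difficulty. Your outline is in fact more detailed than what the paper provides, but it traces the same route.
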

Note the rate $o_{\mathbb{P}}(\sqrt{\log(n)}/n)$ has not been proved yet. However, in the unilevel context, there is empirical evidence that the rate convergence on $\widehat{\bpi}$ is $1/n$ \citep{gazal2012accuracy}.

\section{Variational EM algorithm for multiplex SBM : principle, details and convergence}\label{appendix_VarEM}

\subsection{General principal of the variational EM}

The Stochastic Block Models belong to the  incomplete data models class,
the non-observed data being the block indices $(Z_{i})_{i=1\dots n} \in \{1, \dots, Q\}^n $. As written before, the likelihood has a marginal expression:
\begin{equation}\label{eq:vraismarg}
\ell(\Xall ; \theta) =   \sum_{\bZ\in\{1,\dots Q\}^n} e^{\ell_1(\Xall | \bZ;\bpi)}p(\bZ; \balpha)\,,
\end{equation}
which is  not tractable as soon as $n$ and $Q$ are large. The variational EM is an alternative method  to maximize the marginal likelihood with respect to $\theta$.  The variational EM (applied in the SBM context by \cite{Daudinetal2008})  relies on the following decomposition of (\ref{eq:vraismarg}). Let $\mathcal{R}_{\Xall}$ be any probability  distribution on $\Zall$, we have:
\begin{eqnarray}\label{eq:vraismarg decomp}\nonumber
\log \ell(\Xall | \theta) &=& \sum_{\bZ} \mathcal{R}_{\Xall}(\bZ) \log p(\Xall, \bZ; \theta)-  \sum_{\Zall} \mathcal{R}_{\Xall}(\bZ)  \log \mathcal{R}_{\Xall}({\bZ})\\
&&+  \mathbf{KL}[\mathcal{R}_{\Xall}, p(\cdot | \Xall; \theta)]\,,
\end{eqnarray}
where $\mathbf{KL}$ is the Kullback-Leibler distance, $ p(\Xall, \bZ; \theta)$ is the joint density of $\Xall$ and $\bZ$ (namely the complete likelihood)   and $p(\cdot | \Xall; \theta)$ is the posterior distribution of $\Zall$ given  the data $\Xall$ and the parameters $\theta$. Instead of maximizing $\log \ell(\Xall; \theta) $, the variational EM optimizes a lower bound $\mathcal{I}_{\theta}(\mathcal{R}_{\Xall})$  of  $\log \ell(\Xall | \theta)$ where:
\begin{eqnarray}
\nonumber
\mathcal{I}_{\theta}(\mathcal{R}_{\Xall})  &=& \log   \ell(\Xall | \theta)  -  \mathbf{KL}[\mathcal{R}_{\Xall}, p(\cdot | \Xall; \theta)]\,, \\
&=&    \sum_{\bZ} \mathcal{R}_{\Xall}(\bZ) \log p(\Xall, \bZ; \theta) -  \sum_{\Zall} \mathcal{R}_{\Xall}(\bZ)  \log \mathcal{R}_{\Xall}({\bZ}) \label{eq:decomp VarEM}\,,\\
\nonumber
&\leq& \log \ell(\Xall | \theta)\,.
\end{eqnarray}
Note that, thanks to equality (\ref{eq:decomp VarEM}), optimizing $\mathcal{I}_{\theta}(\mathcal{R}_{\Xall})$ with respect to $\theta$ 
no longer requires the computation of the marginal likelihood.
Note also that the equality $\mathcal{I}_{\theta}(\mathcal{R}_{\Xall})  = \log \ell(\Xall; \theta)$ holds if and only 
if  $\mathcal{R}_{\Xall} = p(\cdot | \Xall; \theta)$. As a consequence, $\mathcal{R}_{\Xall}$ will be taken as an approximation
of $p(\cdot | \Xall; \theta)$ in a certain class of distributions. \citet{jaakkola00} proposed to optimize it in the following class:
\begin{equation}
 \nonumber
\mathcal{R}_{\Xall,\btau}(\Zall) = \prod_{i=1}^n h(Z_i, \widehat{\btau}_i)\,,
 \end{equation}
where $h(\cdot; \btau_i)$ is the multinomial distribution of parameter $\btau_i = (\tau_{i1}, \dots, \tau_{iq})$.
 Finally, the variational EM updates alternatively $\theta$ and $\btau$ in the following way. At iteration $(t)$, given the current state $(\theta^{(t-1)},\btau^{(t-1)})$,
\begin{enumerate}
\item[-]\textbf{Step 1} Compute \\$\btau^{(t)} = \arg \min_{\btau}  \mathbf{KL}[\mathcal{R}_{\Xall,\tau}, p(\cdot | \Xall; \theta^{(t-1)})] = \arg \max_{\btau} \mathcal{I}_{\theta^{(t-1)}}(\mathcal{R}_{\Xall,\btau})  $.
\item[-]\textbf{Step 2} Compute $\theta^{(t)} = \arg \max_{\theta}   \mathcal{I}_{\theta^{(t)}}(\mathcal{R}_{\Xall, \btau^{(t)}}) $.
\end{enumerate}
The details of steps \textbf{1} and  \textbf{2} directly depend on the considered statistical model. For uniplex SBM without covariates,  they are given in \cite{Daudinetal2008}. The details for the multiplex SBM are given here after.

\vspace{1em}

\subsection{Details of the calculus for multiplex SBM models}
We now  detail \textbf{Step 1}  and \textbf{Step 2} for multiplex SBM models.

$\bullet$ \textbf{Step 1}:   $\btau^{(t)}$ verifies 
\begin{equation}
 \nonumber
\btau^{(t)} = \arg \min_{\btau}  \mathbf{KL}[\mathcal{R}_{\Xall,\tau}, p(\cdot | \Xall; \theta^{(t-1)})] = \arg \max_{\btau} \mathcal{I}_{\theta^{(t-1)}}(\mathcal{R}_{\Xall,\btau}) \,.
 \end{equation}

We first rewrite  $\mathcal{I}_{\theta}(\mathcal{R}_{\Xall,\btau})$ for this special context:
\begin{eqnarray*}
 \mathcal{I}_{\theta}(\mathcal{R}_{\Xall,\btau}) &=&   \sum_{\bZ} \mathcal{R}_{\Xall,\btau}(\bZ) \log p(\Xall, \bZ; \theta) -  \sum_{\Zall} \mathcal{R}_{\Xall,\btau}(\bZ)  \log \mathcal{R}_{\Xall,\btau}({\bZ})\,,
 \end{eqnarray*}
 with
 \begin{eqnarray*}
 \log p(\Xall, \bZ; \theta) &=&  \ell_1(\Xall |  \bZ; \theta)+ \log  p(\bZ; \theta)\,,\\
 &=& \sum_{i,j, i\neq j}  \log p(X^{1:K}_{ij} |  Z_i,Z_j; \theta)  + \sum_{i=1}^n \log\alpha_{Z_i}\,.
 \end{eqnarray*}
 This quantity has to be integrated over $\bZ$ where $\bZ \sim \mathcal{R}_{\Xall,\btau}$ which means that $ \bZ= (Z_i)_{i=1\dots n} $ are independent variables such that $\P(Z_{i}=q) = \tau_{iq}$. We obtain:
 \begin{eqnarray*}
 \mathcal{I}_{\theta}(\mathcal{R}_{\Xall,\btau}) &=& \sum_{\bZ} \mathcal{R}_{\Xall,\btau}(\bZ) \left[ \sum_{i,j, i\neq j}  \log p(X^{1:K}_{ij} |  Z_i,Z_j; \theta)  + \sum_{i=1}^n \log\alpha_{Z_i}\right] \\
 &&-  \sum_{\Zall} \mathcal{R}_{\Xall,\btau}(\bZ)  \log \mathcal{R}_{\Xall,\btau}({\bZ})\,,\\
 &=&  \sum_{q,l} \sum_{i,j, i\neq j}  \log p(X^{1:K}_{ij} |  Z_i=q,Z_j=l; \theta)\tau_{iq} \tau_{jl}  \\
 &&+ \sum_{i=1}^n \sum_{q=1}^Q  \tau_{iq}  \log\alpha_{q}  - \sum_{i=1}^n \sum_{q=1}^Q \tau_{iq} \log \tau_{iq}\,, \\
 \end{eqnarray*}
 where $\log p(X^{1:K}_{ij} |  Z_i=q,Z_j=l; \theta)$'s expression is given in equation (\ref{modelK}).

 $\mathcal{I}_{\theta}(\mathcal{R}_{\Xall,\btau}) $ has to be maximized with respect to $\btau$ under the constraint: $\forall i =1 \dots n$, $\sum_{q=1}^Q \tau_{iq}=1$.
 As a consequence, we compute the derivatives of  $\; \mathcal{I}_{\theta}(\mathcal{R}_{\Xall,\btau})  + \sum_{i=1}^n \lambda_i \left[ \sum_{q=1}^Q \tau_{iq}-1\right]$ with respect to $(\lambda_i)_{i=1\dots n}$ and $(\tau_{iq})_{i=1\dots n, q=1 \dots Q}$ where $\lambda_i$ are the Lagrange multipliers, leading to the following collection of equations:  for $i=1\dots n$ and $q=1\dots Q$,

 \begin{equation}
  \nonumber
 \sum_{l} \sum_{j=1, j\neq i}^n  \log p(X^{1:K}_{ij} |  Z_i=q,Z_j=l; \theta)\tau_{jl}   +   \log\alpha_{q}  -   \log \tau_{iq} +1 + \lambda_i=0\,,
 \end{equation}
 which leads to the following fixed point problem:

 \begin{equation}
  \nonumber
  \widehat{\tau}_{iq} = e^{1+ \lambda_i} \alpha_q \prod_{j=1, j\neq i}^n \prod_{l=1}^Q p(X^{1:K}_{ij} |  Z_i=q,Z_j=l; \theta) ^{\widehat{\tau}_{jl}}, \quad \forall  i =1 \dots n,  \forall  q=1 \dots Q\,,
 \end{equation}
 which has to be solved under the constraints $\forall i =1 \dots n$, $\sum_{q=1}^Q \tau_{iq}=1$. This optimization  problem is solved using a  standard fixed point algorithm.

\textbf{Step 2} Compute $\theta^{(t)} = \arg \max_{\theta}   \mathcal{I}_{\theta^{(t)}}(\mathcal{R}_{\Xall, \btau^{(t)}}) $.

Once the $\widehat{\btau}$ have been optimized, the parameters $\theta$ maximizing $\mathcal{I}_{\theta}(\mathcal{R}_{\Xall, \widehat{\btau}})$ have to be computed under the constraints:
$ \sum_{q=1}^Q \alpha_q=1$ and  $\sum_{w \in \{0,1\}^L} \pi_{ql}^{(w)} = 1$ for all $(q,l) \in \{1, \dots,Q\}^2$.

The maximization with respect to $\balpha$ is quite direct and in any case, we obtain:
\begin{eqnarray*}
\widehat{\alpha}_q &=& \frac{1}{n} \sum_{i=1}^n \widehat{\tau}_{iq},  \quad  \widehat{\pi}_{ql}^{(w)} = \frac{\sum_{ij} \widehat{\tau}_{iq}  \widehat{\tau}_{jl} \ind_{X^{1:K}_{ij}=w}}{\sum_{ij} \widehat{\tau}_{iq}  \widehat{\tau}_{jl} }
 \end{eqnarray*}
\begin{Rem}
If the edge probabilities depend on covariates: 
\begin{equation}
 \nonumber
 \mbox{logit}(\pi_{ql}^{(w)}) = \mu_{ql}^{(w)} +  (\beta_{ql}^{(w)})^{\intercal} \mathbf{y}_{ij}\,,
\end{equation}
then the optimization of $ (\mu_{ql}^{(w)})$ and $ (\beta_{ql}^{(w)})$ at step 2 of the VarEm is not explicit anymore and one should resort  to optimization algorithms such as Newton-Raphson
algorithm. \end{Rem}

\subsection{Convergence of the VarEM estimates}

We now consider the consistency of the estimates obtained by the variational EM algorithm.
 Using the variational EM previously described  is equivalent to maximizing the so-called \emph{variational-likelihood} $\mathcal{I}$ where:
\begin{equation}
 \nonumber
 \mathcal{I}(\Xall; \btau,\balpha,\bpi) = \sum_{i \neq j} \tau_{iq}\tau_{jl}  \sum_{w \ \{0,1\}^K} \ind_{X_{ij}=w} \log \pi^{(w)}_{ql} - \sum_{iq} \tau_{iq}(\log \tau_{iq}-\log\alpha_q)\,.
\end{equation}
The variational estimators (VE) are obtained by:
\begin{eqnarray*}
 \widetilde{\btau}(\balpha,\bpi) &=& \arg\max_{\btau} \;  \mathcal{I}(\Xall; \btau,\balpha,\bpi) \quad  (\widetilde{\balpha},\widetilde{\bpi})= \arg\max_{\balpha,\bpi} \;  \mathcal{I}(\Xall; \widetilde{\btau},\balpha,\bpi)
  \end{eqnarray*}
 \begin{theorem}\label{theo:conv var EM}
Assume that ($\A_1$),  ($\A_2$) and ($\A_3$) hold. Then for any distance on the set of parameters $\bpi$,
\begin{equation}
 \nonumber
d(\widetilde{\bpi},\bpi^*) \xrightarrow[n \rightarrow \infty]{\mathbb{P}} 0\,.
 \end{equation}
Moreover, assume that  $d(\widetilde{\bpi},\bpi^*)=o_{\P}(1/n)$, then for any distance on $\mathbb{R}^Q$,
\begin{equation}
 \nonumber
 d(\widetilde{\balpha},\balpha^*) \xrightarrow[n \rightarrow \infty]{\mathbb{P}} 0\,.
\end{equation}
\end{theorem}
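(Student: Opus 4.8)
The plan is to adapt the proof strategy of \cite{celisse:daudin:laurent:2012}, developed for the uniplex Bernoulli SBM, to the multiplex setting. The guiding principle is that the variational estimator $\widetilde{\bpi}$ differs from the maximum likelihood estimator $\widehat{\bpi}$ only through the optimal Kullback-Leibler gap $\min_{\btau}\mathbf{KL}[\mathcal{R}_{\Xall,\btau}, p(\cdot \mid \Xall; \theta)]$ appearing in the decomposition (\ref{eq:decomp VarEM}). Since the consistency of $\widehat{\bpi}$ is already granted by Theorem \ref{theo:consist}, it suffices to show that this gap is asymptotically negligible relative to the marginal log-likelihood, so that the maximizer of the variational criterion $\mathcal{I}$ and the maximizer of $\log\ell$ coincide in the limit.

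First I would establish the central probabilistic ingredient: under ($\A_1$), ($\A_2$) and ($\A_3$), the posterior distribution $p(\cdot \mid \Xall; \theta)$ of the block labels concentrates, up to a label permutation, on the true configuration $\bZ^*$. Concretely, for the true labeling and any competing labeling differing on a macroscopic fraction of nodes, the ratio of complete likelihoods decays exponentially in $n$; this is where the distinguishability condition ($\A_1$)---in its vector-valued multiplex form $\boldsymbol{\pi}_{ql} \neq \boldsymbol{\pi}_{q'l}$---and the boundedness ($\A_2$) are used, through concentration inequalities applied to the empirical connection frequencies $\sum_{j \neq i} \ind_{X^{1:K}_{ij} = w}$ aggregated over the edge types $w \in \{0,1\}^K$. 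The multiplex aspect enters only in that these frequencies now live in the simplex over $\{0,1\}^K$ rather than over $\{0,1\}$, so Hoeffding-type bounds are applied for each type $w$ and combined by a union bound over the finitely many (namely $2^K$) types.

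Next I would transfer this concentration to the variational world. Because a product (mean-field) distribution $\mathcal{R}_{\Xall,\btau}$ can approximate a near-degenerate posterior with vanishing Kullback-Leibler cost, the posterior concentration forces the optimal gap $\min_{\btau}\mathbf{KL}[\mathcal{R}_{\Xall,\btau}, p(\cdot \mid \Xall; \theta)]$ to be negligible compared with $\log\ell(\Xall; \theta)$, uniformly over the parameter set. Consequently $\mathcal{I}(\Xall; \widetilde{\btau}, \cdot, \cdot)$ and $\log\ell(\Xall; \cdot)$ are uniformly close, and consistency of $\widetilde{\bpi}$ follows from that of $\widehat{\bpi}$ by a standard argmax continuity argument. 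For the second assertion, I would note that posterior concentration also implies that the optimal variational weights $\widetilde{\tau}_{iq}$ become, with probability tending to one, indicators of the true blocks once $\widetilde{\bpi}$ is close enough to $\bpi^*$; the rate hypothesis $d(\widetilde{\bpi},\bpi^*)=o_{\P}(1/n)$ is exactly what guarantees that the error accumulated across the $n-1$ incident edges of each node does not overturn the correct block assignment. Given this label recovery, $\widetilde{\alpha}_q = \frac{1}{n}\sum_{i} \widetilde{\tau}_{iq}$ converges to $\alpha^*_q$ by the law of large numbers.

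The hard part will be the posterior-concentration lemma and its quantification: controlling uniformly over all $Q^n$ competing labelings, and in particular handling labelings that differ from $\bZ^*$ on only a small but non-negligible set of nodes, requires sharp exponential bounds together with a careful entropy and union-bound argument over configurations. This is the technical core inherited from \cite{celisse:daudin:laurent:2012}; the multiplex extension does not alter its logical structure but multiplies the bookkeeping by the number of edge types $2^K$.
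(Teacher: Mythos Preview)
Your proposal is correct and follows essentially the same route as the paper: the paper's proof is a brief sketch that points to \cite{celisse:daudin:laurent:2012} and observes that the only modification in the multiplex setting is replacing the two Bernoulli terms in the key log-ratio $\log\big(p(\bZ\mid\Xall;\balpha,\bpi)/p(\bZ^*\mid\Xall;\balpha,\bpi)\big)$ by the $2^K$ categorical terms $\sum_{w\in\{0,1\}^K}\ind_{\{X^{1:K}_{ij}=w\}}\log\big(\pi^{(w)}_{Z_iZ_j}/\pi^{(w)}_{Z^*_iZ^*_j}\big)$, which is exactly the posterior-concentration ingredient you identify as the core of the argument. Your write-up is in fact more detailed than the paper's own, which defers all intermediate results to the uniplex reference.
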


\subsection{About the proofs}

   The proofs of these results require  many intermediate results which won't be given here because their adaptation to the multiplex context is quite direct.
   Indeed,  at any step of the proof, the original Bernoulli distribution needs to replaced by its $K$-dimensional  version. More precisely, a central quantity in the proof is the ratio  $\log \frac{p( \bZ | \Xall; \balpha,\bpi)}{p( \bZ^* | \Xall; \balpha,\bpi)}$.
   In the unilevel case, this quantity is:
     \begin{eqnarray*}
\log \frac{p( \bZ  | \Xall; \balpha,\bpi)}{p( \bZ^* | \Xall; \balpha,\bpi)}  &=&\log \frac{p(\Xall|\bZ ;\bpi)}{p(\Xall|\bZ^*;\bpi)} + \log\frac{p(\bZ; \balpha)} {p(\bZ^*; \balpha)}\\
&=& \sum_{i\neq j} X_{ij}\log \frac{\pi^{(w)}_{Z_i,Z_j}}{\pi^{(w)}_{z^*_i,z^*_j}} + (1-X_{ij}) \log \frac{1-\pi^{(w)}_{Z_i,Z_j}}{1-\pi^{(w)}_{z^*_i,z^*_j}}  + \sum_{i=1}^n \log \frac{\alpha_{Z_i}}{\alpha_{z^*_i}}
\end{eqnarray*}
In the multiplex case, the sum of two terms is replaced by a sum of $2^K$ terms:
   \begin{eqnarray*}
\log \frac{p( \bZ | \boldsymbol{X}; \balpha,\bpi)}{p( \bZ^* | \boldsymbol{X}; \balpha,\bpi)}
&=& \sum_{i\neq j} \sum_{w \in \{0,1\}^K} \ind_{\{X_{ij}=w\}} \log \frac{\pi^{(w)}_{Z_i,Z_j}}{\pi^{(w)}_{Z^*_i,Z^*_j}} + \sum_{i=1}^n \log \frac{\alpha_{Z_i}}{\alpha_{z^*_i}}
\end{eqnarray*}
where $\sum_{w \in \{0,1\}^K}  \pi_{ql}^{(w)}=1$, for any $(q,l)$. Going from two terms to $2^K$ terms does not imply any mathematical difficulty and so does not compromise the convergence results.

 \bibliographystyle{apalike}
\bibliography{multilevel_multiplex_network}

\end{document}